\DeclareMathAlphabet{\mathcal}{OMS}{cmsy}{m}{n}
\newcommand{\vect}[1]{\mathbf{#1}}
\newcommand{\vs}{\vspace{1.5mm}}
\theoremstyle{plain} 
\newtheorem{theorem}{Theorem}[section]
\newtheorem{lemma}[theorem]{Lemma}
\theoremstyle{definition} 
\newtheorem{definition}{Definition}[section]
\newtheorem{assumption}{Assumption}
\theoremstyle{remark} 
\newtheorem{remark}{Remark}
\newcommand{\G}{\mathbb{G}}
\newcommand{\Z}{\mathbb{Z}}
\newcommand{\bits}{\{0,1\}}
\newcommand{\Adv}{\textbf{Adv}}
\newcommand{\mc}[1]{\mathcal{#1}}
\newcommand{\tb}[1]{\textbf{#1}}
\newcommand{\lb}{\linebreak[0]}
\newcommand{\db}{\displaybreak[0]}
\title{Transforming Hidden Vector Encryption Schemes\\ from Composite to
Prime Order Groups}
\author{
    Kwangsu Lee\footnote{Sejong University. Seoul, Korea.
        Email: \texttt{kwangsu@sejong.ac.kr}.}
}
\date{}
\begin{document}

\maketitle

\begin{abstract}
Predicate encryption is a new type of public key encryption that enables
searches on encrypted data. By using predicate encryption, we can search
keywords or attributes on encrypted data without decrypting ciphertexts.
Hidden vector encryption (HVE) is a special kind of predicate encryption.
HVE supports the evaluation of conjunctive equality, comparison, and subset
operations between attributes in ciphertexts and attributes in tokens. In
this paper, we construct efficient HVE schemes in prime order bilinear
groups derived from previous HVE schemes in composite order bilinear
groups, and prove their selective security under simple assumptions. To
achieve this result, we present a conversion method that transforms HVE
schemes from composite order bilinear groups into prime order bilinear
groups. Our method supports any types of prime order bilinear groups and
uses simple assumptions.
\end{abstract}

\vs \noindent {\bf Keywords:} Searchable encryption, Predicate encryption,
Hidden vector encryption, Conversion method, Bilinear maps.

\newpage

\section{Introduction}

Searchable public key encryption is a new type of public key encryption (PKE)
that enables efficient searching on encrypted data \cite{BonehCOP04}. In PKE,
if an agent $A$ wants to search on encrypted data for a user $B$, he should
first decrypt ciphertexts using the private key $SK$ of the user $B$. This
simple method has a problem that the agent requires the user's private key.
In searchable public key encryption, a ciphertext is associated with keywords
or attributes, and a user can generate a token for searching from the user's
private key. That is, an agent $A$ performs searches on encrypted data using
the token $TK$ that is related with keywords or attributes instead of using
the private key $SK$. By using searchable public key encryption, it is
possible to build interesting systems like privacy preserving mail gateway
systems \cite{BonehCOP04}, secure audit log systems \cite{WatersBDS04},
network audit log systems \cite{ShiBCSP07}, and credit card payment gateway
systems \cite{BonehW07}.

Predicate encryption (PE) is a generalization of searchable public key
encryption \cite{BonehW07,KatzSW08}. In PE, a ciphertext is associated with
an attribute $x$, and a token is associated with a predicate $f$. At first, a
sender creates a ciphertext that is associated with an attribute $x$, and an
agent receives a token that corresponds to a predicate $f$ from a receiver.
If $f(x)=1$, then the agent can decrypt ciphertexts that are related with
$x$. Otherwise, that is $f(x)=0$, then the agent cannot get any information
except that $f(x)=0$. That is, PE provides both \textit{message hiding} and
\textit{attribute hiding} properties. Hidden vector encryption (HVE) is a
special kind of PE \cite{BonehW07}. In HVE, a ciphertext and a token are
associated with attribute vectors $\vect{x}, \vect{y}$ respectively, and the
attribute vector for the token contains a special wild card attribute. If
each attribute of a ciphertext is equal with the attribute of a token except
the wild card attribute, then the predicate $f_{\vect{y}}(\vect{x})$ is
satisfied. HVE supports the evaluation of predicates such that conjunctive
equality, conjunctive subset, and conjunctive comparison.

Many HVE schemes were originally proposed in composite order bilinear groups
\cite{BonehW07,ShiW08,LeeL11}. To improve the efficiency of HVE schemes, HVE
schemes in prime order bilinear groups are required. Although many HVE
schemes in prime order groups were constructed from scratch \cite{IovinoP08,
OkamotoT09,Park11e}, we would like to easily obtain HVE schemes in prime
order groups from previous schemes in composite order groups. The previous
conversion methods that convert cryptographic schemes from composite order to
prime order bilinear groups are Freeman's method \cite{Freeman10} and Ducas'
method \cite{Ducas10}. The method of Ducas is that random blinding elements
in ciphertexts can be eliminated in asymmetric bilinear groups of prime order
since the decisional Diffie-Hellman (DDH) assumption holds in asymmetric
bilinear groups. The method of Freeman is that product groups and vector
orthogonality provide the subgroup decision assumption and the subgroup
orthogonality property in prime order bilinear groups, respectively. The
merit of this method is that it can convert many cryptographic schemes from
bilinear groups of composite order to asymmetric bilinear groups of prime
order. The demerits of this method are that the converted scheme only works
in asymmetric bilinear groups and the security of the scheme is proven under
complex assumptions.

\subsection{Our Results}

In this paper, we present a new conversion method that transforms HVE schemes
from composite order bilinear groups into prime order bilinear groups.

Our conversion method is similar to the conversion method of Freeman
\cite{Freeman10} since it uses product groups and vector orthogonality, but
ours has the following three differences. The first difference is that
Freeman's method is related to the subgroup decision (SD) assumption in prime
order bilinear groups, whereas our method is not related to the SD
assumption. The second difference is that Freeman's method only works in
asymmetric bilinear groups of prime order, whereas our method works in any
bilinear groups of prime order. The third difference is that cryptographic
schemes that are converted from Freeman's method use complex assumptions that
depend on complex basis vectors, whereas HVE schemes that are converted from
our method use simple assumptions that are independent of basis vectors.

By using our conversion method, we first convert the HVE scheme of Boneh and
Waters \cite{BonehW07} in composite order bilinear groups into an HVE scheme
in symmetric bilinear groups of prime order. We then prove the converted HVE
scheme is selectively secure under the decisional bilinear Diffie-Hellman
(DBDH) and the parallel 3-party Diffie-Hellman (P3DH) assumptions. Next, we
also convert the delegatable HVE scheme of Shi and Waters \cite{ShiW08} and
the efficient HVE scheme of Lee and Lee \cite{LeeL11} from composite order
bilinear groups to HVE schemes in symmetric bilinear groups of prime order.
Finally, we show that the new P3DH assumption holds in generic group model
introduced by Shoup.

\subsection{Related Work}

PE is closely related to functional encryption \cite{BonehSW11}. In
functional encryption, a ciphertext is associated with attributes $\vect{x}$,
and a private key is associated with a function $f$. The main difference
between PE and functional encryption is that the computation of a predicate
$f(x) \in \bits$ is only allowed in PE whereas the computation of any
function $f(x)$ is allowed in functional encryption. Identity-based
encryption (IBE) is the most simple type of functional encryption, and it
provide an equality function for an identity in ciphertexts \cite{BonehF01}.
Hierarchical IBE (HIBE) is an extension of IBE, and it provides a conjunctive
equality function for a hierarchical identity in ciphertexts
\cite{GentryS02}. Attribute-based encryption (ABE) is also an extension of
IBE, and it provides the most general function that consists of AND, OR, NOT,
and threshold gates \cite{GoyalPSW06}.

The first HVE scheme was proposed by Boneh and Waters \cite{BonehW07}. After
their construction, various HVE schemes were proposed in \cite{ShiW08,
Ducas10,LeeL11}.
A simple HVE scheme can be constructed from a PKE scheme \cite{BonehCOP04,
BonehW07,KatzY09}. This method was introduced by Boneh et al.
\cite{BonehCOP04} to construct a PKE scheme with keyword search (PEKS) using
trapdoor permutations. After that, Boneh and Waters showed that a searchable
public key encryption for general predicates also can be constructed from
this method \cite{BonehW07}. Katz and Yerukhimovich \cite{KatzY09} showed
that it is possible to construct a PE scheme from a PKE scheme if the number
of predicate is less than a polynomial number of a security parameter. The
main idea of this method is to use a multiple instances of key-private PKE
introduced by Bellare et al. \cite{BellareBDP01}. That is, the public key of
searchable public key encryption consists of the public keys of key-private
PKE and each instance of public keys is mapped to each predicate. However,
this method has a serious problem that the total number of predicates is
limited to the polynomial value of a security parameter.

Another HVE scheme can be constructed by extremely generalizing anonymous IBE
(AIBE) \cite{BonehW07,ShiW08,IovinoP08,Ducas10,LeeL11,Park11e}. This method
was introduced by Boneh and Waters \cite{BonehW07}. They used the IBE scheme
of Boneh and Boyen \cite{BonehB04e} and composite order bilinear groups to
provide the anonymity of ciphertexts. Shi and Waters constructed a
delegatable HVE scheme \cite{ShiW08}. Lee and Lee constructed an efficient
HVE scheme with a constant number of pairing operations \cite{LeeL11}. In
composite order bilinear groups, the random blinding property using subgroups
provides the anonymity of ciphertexts and the orthogonal property among
subgroups provides the successful decryption. However, it is inefficient to
use composite order bilinear groups since the group order of composite order
bilinear groups should be large. To overcome this problem of inefficiency,
Freeman presented a general framework that converts cryptographic schemes
from composite order bilinear groups to prime order bilinear groups
\cite{Freeman10}. Ducas also showed that HVE schemes in composite order
bilinear groups are easily converted to schemes in prime order bilinear
groups \cite{Ducas10}. However, these conversion methods result in asymmetric
bilinear groups.

Finally, an HVE scheme can be derived from inner-product encryption (IPE)
\cite{KatzSW08,OkamotoT09,Park11}. IPE is a kind of PE and it enable the
evaluation of inner-product predicates between the vector of ciphertexts and
the vector of tokens. Katz et al. \cite{KatzSW08} constructed the first IPE
scheme under composite order bilinear groups. Okamoto and Takashima
constructed an hierarchical IPE scheme using dual pairing vector spaces
\cite{OkamotoT09}. Park proposed an IPE scheme under prime order bilinear
groups and proved its security under the well-known assumptions
\cite{Park11}. The main idea of converting an IPE scheme to an HVE scheme is
to construct a predicate of conjunctive equality using a predicate of inner
product \cite{KatzSW08}.

\section{Preliminaries}

In this section, we define hidden vector encryption, and introduce bilinear
groups of prime order and two complexity assumptions.

\subsection{Hidden Vector Encryption} \label{sec:back-hve}

Let $\Sigma$ be a finite set of attributes and let $*$ be a special symbol
not in $\Sigma$. Define $\Sigma_* = \Sigma \cup \{*\}$. The star $*$ plays
the role of a wild-card or ``don't care'' value. For a vector $\vec{\sigma} =
(\sigma_1, \ldots, \sigma_\ell) \in \Sigma_*^\ell$, we define a predicate
$f_{\vec{\sigma}}$ over $\Sigma^\ell$ as follows: For $\vec{x} = (x_1,
\ldots, x_\ell) \in \Sigma^\ell$, it set $f_{\vec{\sigma}}(\vec{x}) = 1$ if
$\forall i : (\sigma_i = x_i \mbox{ or } \sigma_i = *)$, it set
$f_{\vec{\sigma}}(\vec{x}) = 0$ otherwise.

\begin{definition}[Hidden Vector Encryption] \label{def:hve-syntax}
An HVE scheme consists of four algorithms \tb{Setup}, \tb{GenToken},
\tb{Encrypt}, and \tb{Query} which are defined as follows:
\begin{description}
\item \tb{Setup}($1^{\lambda}, \ell$): The setup algorithm takes as input a
    security parameter $1^{\lambda}$ and the length parameter $\ell$. It
    outputs a public key $PK$ and a secret key $SK$.

\item \tb{GenToken}($\vec{\sigma}, SK, PK$): The token generation algorithm
    takes as input a vector $\vec{\sigma} = (\sigma_1, \ldots, \sigma_\ell)
    \in \Sigma_{*}^\ell$ that corresponds to a predicate
    $f_{\vec{\sigma}}$, the secret key $SK$ and the public key $PK$. It
    outputs a token $TK_{\vec{\sigma}}$ for the vector $\vec{\sigma}$.

\item \tb{Encrypt}($\vec{x}, M, PK$): The encrypt algorithm takes as input
    a vector $\vec{x} = (x_1, \ldots, x_\ell) \in \Sigma^\ell$, a message
    $M \in \mathcal{M}$, and the public key $PK$. It outputs a ciphertext
    $CT$ for $\vec{x}$ and $M$.

\item \tb{Query}($CT, TK_{\vec{\sigma}}, PK$): The query algorithm takes as
    input a ciphertext $CT$, a token $TK_{\vec{\sigma}}$ for a vector
    $\vec{\sigma}$ that corresponds to a predicate $f_{\vec{\sigma}}$, and
    the public key $PK$. It outputs $M$ if $f_{\vec{\sigma}}(\vec{x})=1$ or
    outputs $\perp$ otherwise.
\end{description}
The scheme should satisfy the following correctness property: For all
$\vec{x} \in \Sigma^\ell$, $M \in \mathcal{M}$, $\vec{\sigma} \in
\Sigma_*^\ell$, let $(PK, SK) \leftarrow \tb{Setup}(1^{\lambda}, \ell)$, $CT
\leftarrow \tb{Encrypt}(\vec{x}, M, PK)$, and $TK_{\vec{\sigma}} \leftarrow
\tb{GenToken}(\sigma, SK, PK)$.
\begin{itemize}
\item If $f_{\vec{\sigma}}(\vec{x}) = 1$, then $\tb{Query}(CT,
    TK_{\vec{\sigma}}, PK) = M$.

\item If $f_{\vec{\sigma}}(\vec{x}) = 0$, then $\tb{Query}(CT,
    TK_{\vec{\sigma}}, PK) = \perp$ with all but negligible probability.
\end{itemize}
\end{definition}

\begin{definition}[Selective Security] \label{def:hve-seind}
The selective security of HVE is defined as the following game between a
challenger $\mc{C}$ and an adversary $\mc{A}$:
\begin{enumerate}
\item \tb{Init}: $\mc{A}$ submits two vectors $\vec{x}_0, \vec{x}_1 \in
    \Sigma^\ell$.

\item \tb{Setup}: $\mc{C}$ runs the setup algorithm and keeps the secret
    key $SK$ to itself, then it gives the public key $PK$ to $\mc{A}$.

\item \tb{Query 1}: $\mc{A}$ adaptively requests a polynomial number of
    tokens for vectors $\vec{\sigma}_1, \ldots, \vec{\sigma}_{q_1}$ that
    correspond to predicates $f_{\vec{\sigma}_1}, \ldots,
    f_{\vec{\sigma}_{q_1}}$ subject to the restriction that
    $f_{\vec{\sigma}_i}(\vec{x}_0) = f_{\vec{\sigma}_i}(\vec{x}_1)$ for all
    $i$. In responses, $\mc{C}$ gives the corresponding tokens
    $TK_{\vec{\sigma}_i}$ to $\mc{A}$.

\item \tb{Challenge}: $\mc{A}$ submits two messages $M_0, M_1$ subject to
    the restriction that if there is an index $i$ such that
    $f_{\vec{\sigma}_i}(\vec{x}_0) = f_{\vec{\sigma}_i}(\vec{x}_1) = 1$
    then $M_0 = M_1$. $\mc{C}$ chooses a random coin $\gamma$ and gives a
    ciphertext $CT$ of $(\vec{x}_{\gamma}, M_{\gamma})$ to $\mc{A}$.

\item \tb{Query 2}: $\mc{A}$ continues to request tokens for vectors
    $\vec{\sigma}_{q_1 +1}, \ldots, \vec{\sigma}_{q}$ that correspond to
    predicates $f_{\vec{\sigma}_{q_1 +1}}, \lb \ldots, f_{\vec{\sigma}_q}$
    subject to the two restrictions as before.

\item \tb{Guess}: $\mc{A}$ outputs a guess $\gamma'$. If $\gamma =
    \gamma'$, it outputs 0. Otherwise, it outputs 1.
\end{enumerate}
The advantage of $\mc{A}$ is defined as $\Adv_{\mc{A}}^{HVE}(1^\lambda) =
\big| \Pr[\gamma = \gamma'] - 1/2 \big|$ where the probability is taken over
the coin tosses made by $\mc{A}$ and $\mc{C}$. We say that an HVE scheme is
selectively secure if all probabilistic polynomial-time (PPT) adversaries
have at most a negligible advantage in the above game.
\end{definition}

\subsection{Bilinear Groups of Prime Order}

Let $\G$ and $\G_{T}$ be multiplicative cyclic groups of prime $p$ order. Let
$g$ be a generator of $\G$. The bilinear map $e:\G \times \G \rightarrow
\G_{T}$ has the following properties:
\begin{enumerate}
\item Bilinearity: $\forall u,v \in \G$ and $\forall a,b \in \Z_p$,
    $e(u^a,v^b) = e(u,v)^{ab}$.
\item Non-degeneracy: $\exists g$ such that $e(g,g)$ has order $p$, that
    is, $e(g,g)$ is a generator of $\G_T$.
\end{enumerate}
We say that $(p, \G, \G_T, e)$ are bilinear groups if the group operations in
$\G$ and $\G_T$ as well as the bilinear map $e$ are all efficiently
computable.

\subsection{Complexity Assumptions} \label{sec:assump-prime}

We introduce two simple assumptions under prime order bilinear groups. The
decisional bilinear Diffie-Hellman assumption was introduced in
\cite{BonehF01}. The parallel 3-party Diffie-Hellman (P3DH) assumption is
newly introduced in this paper.

\begin{assumption}[Decisional Bilinear Diffie-Hellman, DBDH]
Let $(p, \G, \G_T, e)$ be a description of the bilinear group of prime order
$p$. The DBDH problem is stated as follows: given a challenge tuple
    $$D = \big( (p, \G, \G_T, e),~
    g, g^a, g^b, g^c \big) \mbox{ and } T,$$
decides whether $T = T_0 = e(g, g)^{abc}$ or $T = T_1 = e(g,g)^d$ with random
choices of $a, b, c, d \in \Z_p$. The advantage of $\mc{A}$ is defined as
    $\Adv^{DBDH}_{\mc{A}} (1^\lambda) = \big|
    \Pr \big[\mc{A}(D, T_0) = 1 \big] -
    \Pr \big[\mc{A}(D, T_1) = 1 \big] \big|$
where the probability is taken over the random choices of $a, b, c, d \in
\Z_p$ and the random bits used by $\mc{A}$. We say that the DBDH assumption
holds if no PPT algorithm has a non-negligible advantage in solving the above
problem.
\end{assumption}

\begin{assumption}[Parallel 3-party Diffie-Hellman, P3DH]
Let $(p, \G, \G_T, e)$ be a description of the bilinear group of prime order
$p$. The P3DH problem is stated as follows: given a challenge tuple
    \begin{align*}
    D = \big(
    &   (p, \G, \G_T, e),~
        (g,f),(g^a, f^a), (g^b, f^b),
        (g^{ab} f^{z_1}, g^{z_1}), (g^{abc} f^{z_2}, g^{z_2})
        \big) \mbox{ and } T,
    \end{align*}
decides whether $T = T_0 = (g^{c} f^{z_3}, g^{z_3})$ or $T = T_1 = (g^d
f^{z_3}, g^{z_3})$ with random choices of $a, b, c, d \in \Z_p$ and $z_1,
z_2, z_3 \in \Z_p$. The advantage of $\mc{A}$ is defined as
    $\Adv^{P3DH}_{\mc{A}} (1^\lambda) = \big|
    \Pr \big[\mc{A}(D, T_0) = 1 \big] -
    \Pr \big[\mc{A}(D, T_1) = 1 \big] \big|$
where the probability is taken over the random choices of $a, b, c, d, z_1,
z_2, z_3$ and the random bits used by $\mc{A}$. We say that the P3DH
assumption holds if no PPT algorithm has a non-negligible advantage in
solving the above problem.
\end{assumption}

\begin{remark}
The P3DH problem can be modified as follows: given a challenge tuple
    $D = \big( (p, \G, \G_T, e),~
    (g,f), \lb (g^a, f^a), (g^b, f^b),
    (g^{ab} f^{z_1}, g^{z_1}), (g^{c} f^{z_2}, g^{z_2}) \big)$
    and $T$,
decides whether $T = T_0 = (g^{abc} f^{z_3}, g^{z_3})$ or $T = T_1 = (g^d
f^{z_3}, g^{z_3})$. However, this modified one is the same as the original
one by changing the position of the challenge tuple as
    $D = \big( (p, \G, \G_T, e),~
    (g,f), (g^a, f^a), (g^b, f^b),
    (g^{ab} f^{z_1}, g^{z_1}), T \big)$
    and $T' = (g^{c} f^{z_2}, g^{z_2})$,
Thus, we will use any one of challenge tuple forms for the P3DH assumption.
\end{remark}

\section{Our Techniques}

The basic idea to convert HVE schemes from composite order bilinear groups to
prime order bilinear groups is to use bilinear product groups that are
extended from bilinear groups using the direct product operation. Bilinear
product groups were widely used in dual system encryption of Waters
\cite{Waters09,LewkoW10}, private linear broadcast encryption of Garg et al.
\cite{GargKSW10}, and the conversion method of Freeman \cite{Freeman10}. The
product groups extended from multiplicative cyclic groups represent an
exponent as a vector. Thus vector operations in product groups and bilinear
product groups should be defined. Definition \ref{def:vector-ops} and
Definition \ref{def:bilinear-prod} define the vector operations in product
groups and bilinear product groups, respectively.

\begin{definition}[Vector Operations] \label{def:vector-ops}
Let $\G$ be multiplicative cyclic groups of prime $p$ order. Let $g$ be a
generator of $\G$. We define vector operations over $\G$ as follows:
\begin{enumerate}
\item For a vector $\vec{b} = (b_1, \ldots, b_n) \in \Z_p^n$, define
    $g^{\vec{b}} := (g^{b_1}, \ldots, g^{b_n}) \in \G^n$.

\item For a vector $\vec{b} = (b_1, \ldots, b_n) \in \Z_p^n$ and a scalar
    $c \in \Z_p$, define $(g^{\vec{b}})^c := (g^{b_1 c}, \ldots, g^{b_n c})
    \in \G^n$.

\item For two vectors $\vec{a} = (a_1, \ldots, a_n), \vec{b} = (b_1,
    \ldots, b_n) \in \Z_p^n$, define $g^{\vec{a}} g^{\vec{b}} := (g^{a_1 +
    b_1}, \ldots, g^{a_n + b_n}) \in \G^n$.
\end{enumerate}
\end{definition}

\begin{definition}[Bilinear Product Groups] \label{def:bilinear-prod}
Let $(p, \G, \G_{T}, e)$ be bilinear groups of prime order. Let $g$ be a
generator of $\G$. For integers $n$ and $m$, the bilinear product groups
$((p, \G, \G_T, e), g^{\vec{b}_1}, \ldots, g^{\vec{b}_m})$ of basis vectors
$\vec{b}_1, \ldots, \vec{b}_m$ is defined as follows
\begin{enumerate}
\item The basis vectors $\vec{b}_1, \ldots, \vec{b}_m$ are random vectors
    such that $\vec{b}_i = (b_{i,1}, \ldots, b_{i,n}) \in \Z_p^n$.

\item The bilinear map $e:\G^n \times \G^n \rightarrow \G_T$ is defined as
    $e(g^{\vec{a}},g^{\vec{b}}) := \prod_{i=1}^n e(g^{a_i}, g^{b_i}) =
    e(g,g)^{\vec{a} \cdot \vec{b}}$ where $\cdot$ is the inner product
    operation.
\end{enumerate}
\end{definition}

To guarantee the correctness of cryptographic schemes in bilinear product
groups, the orthogonal property of composite order bilinear groups should be
implemented in bilinear product groups. The previous research \cite{Waters09,
GargKSW10,Freeman10,LewkoW10} showed that the orthogonal property can be
implemented in bilinear product groups. The idea is that the orthogonality
between vectors can be defined using the inner-product operation such that
$\vec{x} \cdot \vec{y} = 0$ since the bilinear map provides the inner-product
operation. Definition \ref{def:orth} define the orthogonality in bilinear
product groups.

\begin{definition}[Orthogonality] \label{def:orth}
Let $((p, \G, \G_{T}, e), g^{\vec{b}_1}, \ldots, g^{\vec{b}_m})$ be bilinear
product groups with $n,m$ parameters. Let $G_i, G_j$ be subgroups spanned by
$g^{\vec{b}_i}, g^{\vec{b}_j}$, respectively. That is, $G_i = \langle
g^{\vec{b}_i} \rangle$ and $G_j = \langle g^{\vec{b}_j} \rangle$. Then the
two subgroups $G_i$ and $G_j$ are orthogonal to each other if $e(\vec{A},
\vec{B}) = 1$ for all $\vec{A} \in G_i$ and $\vec{B} \in G_j$.
\end{definition}

The main idea of our method that converts HVE schemes from composite order
bilinear groups to prime order bilinear groups is that the previous HVE
schemes \cite{BonehW07,ShiW08,LeeL11} in composite order bilinear groups use
the composite 3-party Diffie-Hellman (C3DH) assumption that is not a kind of
the subgroup decision (SD) assumption.

The SD assumption is to distinguish whether $h \in \G$ or $h \in \G_1$ where
$\G$ is a group and $\G_1$ is a subgroup of $\G$ \cite{BonehGN05}. In product
groups $\G^n$, a subgroup $G$ is defined as a vector space spanned by some
basis vectors $\vec{b}_1, \ldots, \vec{b}_m$ such that $G = \langle
g^{\vec{b}_1}, \ldots, g^{\vec{b}_m} \rangle$. If a subgroup is constructed
from one basis vector, then the SD assumption is related to the DDH
assumption. If a subgroup is constructed from $k$ number of basis vectors,
then the SD assumption is related to the decisional $k$-Linear ($k$-DLIN)
assumption \cite{Freeman10}. In symmetric bilinear groups of prime order, a
subgroup should be constructed from two basis vectors since the DDH
assumption is not valid \cite{Waters09,GargKSW10}. If a subgroup is
constructed from two basis vectors, then cryptographic schemes become
complicated and there is no generic conversion method from composite order
groups to prime order groups. In asymmetric bilinear groups of prime order, a
subgroup can be constructed from one basis vector since the DDH assumption is
valid \cite{Freeman10,LewkoW10}. If a subgroup is constructed from one basis
vector, then there is a generic conversion method of Freeman, but it only
works in asymmetric bilinear groups.

The C3DH assumption is defined in Assumption \ref{assump:c3dh}. The notable
properties of the C3DH assumption are that the target value $T$ is always an
element of $\G_{p_1 p_2}$ in contrast to the SD assumption, and the subgroup
$\G_{p_2}$ plays the role of random blinding. From these properties of the
C3DH assumption, it is possible to use just one basis vector to construct a
subgroup. Additionally, it is possible to use simple basis vectors for
cryptographic schemes since ciphertexts and tokens can use different
subgroups that are not orthogonal.

\begin{assumption}[Composite 3-party Diffie-Hellman, C3DH] \label{assump:c3dh}
Let $(N, \G, \G_{T}, e)$ be a description of bilinear groups of composite
order $N = p_1 \cdots p_m$ where $p_i$ is a random prime. Let $g_{p_i}$ be a
generator of the subgroup $\G_{p_i}$. The C3DH assumption is stated as
follows: given a challenge tuple
    $$\vec{D} = \big( (N, \G, \G_T, e),
    g_{p_1}, \ldots, g_{p_m},
    g_{p_1}^a, g_{p_1}^b, g_{p_1}^{ab} R_1, g_{p_1}^{abc} R_2 \big)
    \mbox{ and } T,$$
decides whether $T = T_0 = g_{p_1}^{c} R_3$ or $T = T_1 = g_{p_1}^d R_3$ with
random choices of $a,b,c,d \in \Z_{p_1}$ and $R_1,R_2,R_3 \in \G_{p_2}$.
\end{assumption}

For instance, we select basis vectors $\vec{b}_{1,1} = (1,0), \vec{b}_{1,2} =
(1,a), \vec{b}_{2} = (a,-1)$ for the conversion from bilinear groups of
composite $N = p_1 p_2$ order. For the conversion from bilinear groups of
composite $N = p_1 p_2 p_3$ order, we select basis vectors $\vec{b}_{1,1} =
(1,0,a_1), \vec{b}_{1,2} = (1,a_2,0), \vec{b}_{2} = (a_2,-1,a_1a_2-a_3),
\vec{b}_3 = (a_1,a_3,-1)$. Although different basis vectors were selected,
the assumption for the security proof is the simple one that is independent
of basis vectors.

\section{Conversion 1: BW-HVE} \label{sec:bw-hve}

In this section, we convert the HVE scheme of Boneh and Waters
\cite{BonehW07} in composite order bilinear groups to an HVE scheme in prime
order bilinear groups and prove its selective security under the DBDH and
P3DH assumptions.

\subsection{Construction}

\begin{description}
\item [\tb{Setup}($1^{\lambda}, \ell$):] It first generates the bilinear
    group $\G$ of prime order $p$ of bit size $\Theta(\lambda)$. It chooses
    a random value $a \in \Z_p$ and sets basis vectors for bilinear product
    groups as $\vec{b}_{1,1} = (1, 0),~ \vec{b}_{1,2} = (1, a),~
    \vec{b}_{2} = (a, -1)$. It also sets $\vec{B}_{1,1} =
    g^{\vec{b}_{1,1}}, \vec{B}_{1,2} = g^{\vec{b}_{1,2}}, \vec{B}_2 =
    g^{\vec{b}_2}$.
    It selects random exponents $v', \{ u'_i, h'_i, w'_i \}_{i=1}^\ell,
    \alpha \in \Z_p$, $z_v, \{ z_{u,i}, z_{h,i}, z_{w,i} \}_{i=1}^\ell \in
    \Z_p$ and outputs a secret key and a public key as
    \begin{align*}
    SK = \Big(~
    &   \vec{V}_k = \vec{B}_{1,2}^{v'},
        \big\{
        {\vec{U}_{k,i}} = \vec{B}_{1,2}^{u'_i},~
        {\vec{H}_{k,i}} = \vec{B}_{1,2}^{h'_i},~
        {\vec{W}_{k,i}} = \vec{B}_{1,2}^{w'_i}
        \big\}_{i=1}^\ell,
        \vec{B}_{1,2}^{\alpha}
    ~\Big), \\
    PK = \Big(~
    &   \vec{B}_{1,1},~ \vec{B}_{1,2},~ \vec{B}_2,~
        \vec{V}_c = \vec{B}_{1,1}^{v'} \vec{B}_2^{z_v},~
        \big\{
        \vec{U}_{c,i} = \vec{B}_{1,1}^{u'_i} \vec{B}_2^{z_{u,i}},~
        \vec{H}_{c,i} = \vec{B}_{1,1}^{h'_i} \vec{B}_2^{z_{h,i}},~
        \vec{W}_{c,i} = \vec{B}_{1,1}^{w'_i} \vec{B}_2^{z_{w,i}}
        \big\}_{i=1}^\ell,~ \\
    &   \Omega = e(\vec{B}_{1,1}^{v'}, \vec{B}_{1,2})^{\alpha}
    ~\Big).
    \end{align*}

\item [\tb{GenToken}($\vec{\sigma}, SK, PK$):] It takes as input a vector
    $\vec{\sigma} = (\sigma_1, \ldots, \sigma_\ell) \in \Sigma_*^\ell$, the
    secret key $SK$, and the public key $PK$. Let $S$ be the set of indexes
    that are not wild-card fields in the vector $\vec{\sigma}$. It selects
    random exponents $\{ r_{1,i}, r_{2,i} \}_{i \in S} \in \Z_p$ and
    outputs a token as
    \begin{align*}
    TK_{\vec{\sigma}} = \Big(~
    &   \vec{K}_1 = \vec{B}_{1,2}^{\alpha} \prod_{i \in S} \big(
                    \vec{U}_{k,i}^{\sigma_i} \vec{H}_{k,i} \big)^{r_{1,i}}
                    \vec{W}_{k,i}^{r_{2,i}},~
        \big\{
        \vec{K}_{2,i} = \vec{V}_k^{-r_{1,i}},~
        \vec{K}_{3,i} = \vec{V}_k^{-r_{2,i}}
        \big\}_{i \in S}
    ~\Big).
    \end{align*}

\item [\tb{Encrypt}($\vec{x}, M, PK$):] It takes as input a vector $\vec{x}
    = (x_1, \ldots, x_\ell) \in \Sigma^\ell$, a message $M \in \mc{M}$, and
    the public key $PK$. It first chooses a random exponent $t \in \Z_p$
    and random blinding values $z_1, \{ z_{2,i}, z_{3,i} \}_{i=1}^\ell \in
    \Z_p$. Then it outputs a ciphertext as
    \begin{align*}
    CT = \Big(~
        C_0 = \Omega^t M,~
        \vec{C}_1 = \vec{V}_c^t \vec{B}_2^{z_1},~
        \big\{
        \vec{C}_{2,i} = (\vec{U}_{c,i}^{x_i} \vec{H}_{c,i})^t \vec{B}_2^{z_{2,i}},~
        \vec{C}_{3,i} = \vec{W}_{c,i}^t \vec{B}_2^{z_{3,i}}
        \big\}_{i=1}^\ell
    ~\Big).
    \end{align*}

\item [\tb{Query}($CT, TK_{\vec{\sigma}}, PK$):] It takes as input a
    ciphertext $CT$ and a token $TK_{\vec{\sigma}}$ of a vector
    $\vec{\sigma}$. It first computes
    \begin{align*}
    M \leftarrow C_0 \cdot \Big( e(\vec{C}_1, \vec{K}_1) \cdot
        \prod_{i \in S} e(\vec{C}_{2,i}, \vec{K}_{2,i}) \cdot
            e(\vec{C}_{3,i}, \vec{K}_{3,i}) \Big)^{-1}.
    \end{align*}
    If $M \notin \mathcal{M}$, it outputs $\perp$ indicating that the
    predicate $f_{\vec{\sigma}}$ is not satisfied. Otherwise, it outputs
    $M$ indicating that the predicate $f_{\vec{\sigma}}$ is satisfied.
\end{description}

\subsection{Correctness}

If $f_{\vec{\sigma}}(\vec{x}) = 1$, then the following calculation shows that
$\textbf{Query}(CT, TK_{\vec{\sigma}}, PK) = M$ using the orthogonality of
basis vectors such that $e(g^{\vec{b}_2}, g^{\vec{b}_{1,2}})=1$.
    \begin{align*}
    \lefteqn{ e(\vec{C}_1, \vec{K}_1) \cdot
        \prod_{i \in S} \big( e(\vec{C}_{2,i}, \vec{K}_{2,i}) \cdot
            e(\vec{C}_{3,i}, \vec{K}_{3,i}) \big) } \\
    &=  e(\vec{V}_c^{t}, \vec{B}_{1,2}^{\alpha} \prod_{i \in S} (
                \vec{U}_{k,i}^{\sigma_i} \vec{H}_{k,i})^{r_{1,i}}
                \vec{W}_{k,i}^{r_{2,i}}) \cdot
        \prod_{i \in S}
            e((\vec{U}_{c,i}^{x_i} \vec{H}_{c,i})^t, \vec{V}_k^{-r_{1,i}}) \cdot
            e(\vec{W}_{c,i}^t, \vec{V}_k^{-r_{2,i}}) \\
    &=  e(\vec{B}_{1,1}^{v' t}, \vec{B}_{1,2}^{\alpha}) \cdot
        \prod_{i \in S} e(g^{v'}, g^{u'_i (\sigma_i - x_i)})^{t \cdot r_{1,i}}
     =  e(\vec{B}_{1,1}^{v'}, \vec{B}_{1,2})^{\alpha t}.
    \end{align*}
Otherwise, that is $f_{\vec{\sigma}}(\vec{x}) = 0$, then the probability of
$\textbf{Query}(CT, TK_{\vec{\sigma}}, PK) \neq \perp$ is negligible by
limiting $|\mathcal{M}|$ to less than $|\G_T|^{1/4}$.

\subsection{Security}

\begin{theorem} \label{thm:bw-hve}
The above HVE scheme is selectively secure under the DBDH and P3DH
assumptions.
\end{theorem}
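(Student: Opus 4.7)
I would establish selective security by a sequence of hybrid games that gradually transform the challenge ciphertext from an encryption of $(\vec{x}_0, M_0)$ into an encryption of $(\vec{x}_1, M_1)$. Since the adversary commits to $\vec{x}_0, \vec{x}_1$ at the start, the simulator knows the set $I = \{ i : x_{0,i} \neq x_{1,i} \}$ of differing coordinates at setup time, which is exactly what lets me embed the P3DH and DBDH challenges into the right positions of the public key and ciphertext.

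The first stretch of hybrids walks through $I$ one coordinate at a time. For each $i \in I$, I replace the ciphertext element $\vec{C}_{2,i} = (\vec{U}_{c,i}^{x_{\gamma,i}} \vec{H}_{c,i})^t \vec{B}_2^{z_{2,i}}$ by a variant in which $x_{\gamma,i}$ is swapped for a fresh uniform exponent, justified by a reduction to P3DH. The reduction identifies $(g,f)$ with the two coordinates underlying the prime-order basis vectors $\vec{b}_{1,1} = (1,0)$, $\vec{b}_{1,2} = (1,a)$, $\vec{b}_2 = (a,-1)$; the structured pairs $(g^{ab} f^{z_1}, g^{z_1})$ and $(g^{abc} f^{z_2}, g^{z_2})$ supply parts of the public key, while the P3DH target $(g^c f^{z_3}, g^{z_3})$ versus $(g^d f^{z_3}, g^{z_3})$ is planted into $\vec{C}_{2,i}$. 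The selective-security restriction $f_{\vec{\sigma}}(\vec{x}_0) = f_{\vec{\sigma}}(\vec{x}_1)$ is exactly what makes every queried token simulable: either $\vec{\sigma}$ wild-cards every coordinate of $I$, in which case the unknown $c$ never reaches the token, or $\vec{\sigma}$ disagrees with both challenge vectors at some $i \in I$, producing a nonzero factor $\sigma_i - x_{\gamma,i}$ that lets the simulator cancel the unknown exponent inside the aggregate $\vec{K}_1$.

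After this chain, the ciphertext outside $C_0$ is information-theoretically independent of $\gamma$, and a single DBDH step randomizes $C_0 = \Omega^t M_\gamma$: identifying $\Omega^t = e(\vec{B}_{1,1}, \vec{B}_{1,2})^{v' \alpha t}$ with $e(g,g)^{abc}$ versus $e(g,g)^d$ makes $C_0$ uniform in $\G_T$ and hides $M_\gamma$. The main obstacle I anticipate is the P3DH reduction itself. It demands a precise assignment of the P3DH components to the public-key elements $\vec{V}_c, \vec{U}_{c,i}, \vec{H}_{c,i}, \vec{W}_{c,i}$ versus the challenge ciphertext, together with a verification that the corresponding secret-key elements $\vec{V}_k, \vec{U}_{k,i}, \vec{H}_{k,i}, \vec{W}_{k,i}$ all lie in the $\vec{B}_{1,2}$ direction so that the orthogonality $e(\vec{B}_2, \vec{B}_{1,2}) = 1$ cleanly absorbs the $\vec{B}_2$-blinding during token evaluation. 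Carrying out this bookkeeping consistently for every queried token, in both the wild-card and differing cases, is the delicate part of the argument.
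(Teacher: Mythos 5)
Your plan is sound in outline and bottoms out in the same underlying hybrid argument, but it takes a genuinely different route from the paper, whose proof of Theorem~\ref{thm:bw-hve} is deliberately modular. The paper never re-runs the Boneh--Waters game sequence: it introduces three intermediate assumptions stated directly in the bilinear product group (Assumptions \ref*{sec:bw-hve}-1, \ref*{sec:bw-hve}-2, \ref*{sec:bw-hve}-3, the images of DBDH, BSD, and C3DH under the conversion), observes in Lemma~\ref{lem:bw-hve} that the original composite-order proof of \cite{BonehW07} carries over verbatim once the vector operations of Definition~\ref{def:vector-ops} are substituted, and then separately derives each intermediate assumption from DBDH (Lemma~\ref{lem:bw-hve-a1}), from orthogonality alone (Lemma~\ref{lem:bw-hve-a2}), or from P3DH (Lemma~\ref{lem:bw-hve-a3}). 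That factoring is the whole point of the conversion method: the only new work is three short assumption reductions, which are then reused for the Shi--Waters and Lee--Lee schemes. Your inlined approach instead commits you to re-proving the full BW07 hybrid chain with the P3DH embedding done by hand at every step, and as sketched it leaves three concrete items unaddressed: (i) the BSD-analogue transition that the BW07 proof needs, which here is the paper's Assumption \ref*{sec:bw-hve}-2 and holds unconditionally because $e(g^{\vec{b}_2}, g^{\vec{b}_{1,2}})=1$; (ii) the token-simulation case analysis --- for a token with $f_{\vec{\sigma}}(\vec{x}_0)=f_{\vec{\sigma}}(\vec{x}_1)=0$ the index witnessing the failure need not lie in $I$, so the per-position reduction must be organized around whether $\sigma_i=*$ at the embedded position rather than around a disagreement inside $I$; and (iii) the final DBDH step is only available in the branch where no matching token is issued, since otherwise the game forces $M_0=M_1$ and $C_0$ cannot be randomized, so the standard case split is required. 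None of these is fatal, but each is real work that the paper's decomposition lets you inherit from \cite{BonehW07} for free.
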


\begin{proof}
The proof of this theorem is easily obtained from the following four Lemmas
\ref{lem:bw-hve}, \ref{lem:bw-hve-a1}, \ref{lem:bw-hve-a2}, and
\ref{lem:bw-hve-a3}. Before presenting the four lemmas, we first introduce
the following three assumptions. The HVE scheme of Boneh and Waters
constructed in bilinear groups of composite order $N = p_1 p_2$, and its
security was proven under the DBDH, bilinear subgroup decision (BSD), and
C3DH assumptions \cite{BonehW07}. These assumptions in composite order
bilinear groups are converted to the following Assumptions
\ref*{sec:bw-hve}-1, \ref*{sec:bw-hve}-2, and \ref*{sec:bw-hve}-3 using our
conversion method.
\end{proof}

\vs \noindent \textbf{Assumption \ref*{sec:bw-hve}-1} Let $((p, \G, \G_T, e),
g^{\vec{b}_{1,1}}, g^{\vec{b}_{1,2}}, g^{\vec{b}_{2}})$ be the bilinear
product group of basis vectors $\vec{b}_{1,1} = (1,0), \vec{b}_{1,2} = (1,a),
\vec{b}_{2} = (a,-1)$. The Assumption \ref*{sec:bw-hve}-1 is stated as
follows: given a challenge tuple
    $$D = \big( (p, \G, \G_T, e),~
    g^{\vec{b}_{1,1}}, g^{\vec{b}_{1,2}}, g^{\vec{b}_2},
    (g^{\vec{b}_{1,1}})^{c_1}, (g^{\vec{b}_{1,1}})^{c_2},
    (g^{\vec{b}_{1,2}})^{c_1}, (g^{\vec{b}_{1,2}})^{c_2},
    (g^{\vec{b}_{1,1}})^{c_3} \big) \mbox{ and } T,$$
decides whether $T = T_0 = e(g, g)^{c_1 c_2 c_3}$ or $T = T_1 = e(g,g)^d$
with random choices of $c_1, c_2, c_3, d \in \Z_p$.

\vs \noindent \textbf{Assumption \ref*{sec:bw-hve}-2} Let $((p, \G, \G_T, e),
g^{\vec{b}_{1,1}}, g^{\vec{b}_{1,2}}, g^{\vec{b}_{2}})$ be the bilinear
product group of basis vectors $\vec{b}_{1,1} = (1,0), \vec{b}_{1,2} = (1,a),
\vec{b}_{2} = (a,-1)$. The Assumption \ref*{sec:bw-hve}-2 is stated as
follows: given a challenge tuple
    $$D = \big( (p, \G, \G_T, e),~
    g^{\vec{b}_{1,1}}, g^{\vec{b}_{1,2}}, g^{\vec{b}_2}
    \big) \mbox{ and } T,$$
decides whether $T = T_0 = e((g^{\vec{b}_{1,1}})^{c_1} (g^{\vec{b}_2})^{c_3},
(g^{\vec{b}_{1,2}})^{c_2})$ or $T = T_1 = e((g^{\vec{b}_{1,1}})^{c_1},
(g^{\vec{b}_{1,2}})^{c_2})$ with random choices of $c_1, c_2, c_3 \in \Z_p$.

\vs \noindent \textbf{Assumption \ref*{sec:bw-hve}-3} Let $((p, \G, \G_T, e),
g^{\vec{b}_{1,1}}, g^{\vec{b}_{1,2}}, g^{\vec{b}_{2}})$ be the bilinear
product group of basis vectors $\vec{b}_{1,1} = (1,0), \vec{b}_{1,2} = (1,a),
\vec{b}_{2} = (a,-1)$. The Assumption \ref*{sec:bw-hve}-3 is stated as
follows: given a challenge tuple
    \begin{eqnarray*}
    D = \big( (p, \G, \G_T, e),~
    g^{\vec{b}_{1,1}}, g^{\vec{b}_{1,2}}, g^{\vec{b}_2},
    (g^{\vec{b}_{1,2}})^{c_1}, (g^{\vec{b}_{1,2}})^{c_2},
    (g^{\vec{b}_{1,1}})^{c_1 c_2} (g^{\vec{b}_2})^{z_1},
    (g^{\vec{b}_{1,1}})^{c_1 c_2 c_3} (g^{\vec{b}_2})^{z_2}
    \big) \mbox{ and } \vec{T},
    \end{eqnarray*}
decides whether $T = T_0 = (g^{\vec{b}_{1,1}})^{c_3} (g^{\vec{b}_2})^{z_3}$
or $T = T_1 = (g^{\vec{b}_{1,1}})^d (g^{\vec{b}_2})^{z_3}$ with random
choices of $c_1, c_2, c_3, d \in \Z_p$ and $z_1, z_2, z_3 \in \Z_p$.

\begin{lemma} \label{lem:bw-hve}
The above HVE scheme is selectively secure under the Assumptions
\ref*{sec:bw-hve}-1, \ref*{sec:bw-hve}-2, and \ref*{sec:bw-hve}-3.
\end{lemma}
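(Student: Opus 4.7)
The plan is to mirror the hybrid proof structure of Boneh and Waters for their composite-order HVE, porting each game hop to the prime-order setting by replacing the composite-order DBDH, BSD, and C3DH assumptions with Assumptions \ref*{sec:bw-hve}-1, \ref*{sec:bw-hve}-2, and \ref*{sec:bw-hve}-3 respectively. The goal is to transform the real selective-security game, where the challenge ciphertext encrypts $(\vec{x}_\gamma, M_\gamma)$ with a uniformly random bit $\gamma$, into a final game whose challenge ciphertext is statistically independent of $\gamma$. Throughout, I will repeatedly exploit the orthogonality $e(g^{\vec{b}_2}, g^{\vec{b}_{1,2}}) = 1$, which is the prime-order analog of the subgroup orthogonality used in the original proof.

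First, I would introduce a sequence of hybrids $G_0, G_{1,1}, \ldots, G_{1,\ell}$ that rewrites the ciphertext components $\vec{C}_{2,i}$ one index at a time: in $G_{1,i}$, the first $i$ indices encode a uniformly random attribute rather than $x_{\gamma,i}$, while the remaining indices still encode $x_{\gamma,i}$. Each single-index transition $G_{1,i-1} \to G_{1,i}$ is justified by a reduction to Assumption \ref*{sec:bw-hve}-3, where the challenge term $c_3$ is embedded into the $i$-th coordinate of the ciphertext. Since all token components lie in the $g^{\vec{b}_{1,2}}$ subspace, any additional $g^{\vec{b}_2}$ blinding introduced in the ciphertext vanishes under pairing, so the \textbf{Query} output is identical in both hybrids.

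Next, in $G_2$ I would reshape the factor $\Omega^t$ masking $M_\gamma$ via Assumption \ref*{sec:bw-hve}-2, which permits injecting a $g^{\vec{b}_2}$ term inside the pairing without affecting the response to any well-formed token. Finally, in $G_3$ I would apply Assumption \ref*{sec:bw-hve}-1 (the DBDH-style assumption) to replace the masking value $e(g,g)^{v' \alpha t}$ by a uniformly random element of $\G_T$; at that point $M_\gamma$ is perfectly hidden, so $\gamma$ is information-theoretically masked and the adversary's advantage is zero. Summing the loss across all $\ell + O(1)$ hops yields the lemma.

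The main obstacle will be the position-by-position reductions to Assumption \ref*{sec:bw-hve}-3. For each queried $\vec{\sigma}_j$, the simulator must manufacture a valid $TK_{\vec{\sigma}_j}$ from the P3DH-style challenge alone, in particular without knowing the exponent hidden by the $g^{\vec{b}_2}$ blinding in the embedded $i$-th coordinate. The restriction $f_{\vec{\sigma}_j}(\vec{x}_0) = f_{\vec{\sigma}_j}(\vec{x}_1)$ provides the combinatorial handle: when the $i$-th coordinate of $\vec{\sigma}_j$ is non-wildcard with $\sigma_{j,i} \neq x_{\gamma,i}$, the simulator can program $r_{1,i}$ so that the unknown exponent cancels in $\vec{K}_1$ and $\vec{K}_{2,i}$, and program $r_{2,i}$ so that it also cancels in $\vec{K}_{3,i}$; when the coordinate is wildcard or agrees with both $x_{0,i}$ and $x_{1,i}$, a parallel case handles the residual term using the common attribute. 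Checking that these simultaneous cancellations preserve the correct marginal distribution of the token, consistently across all non-wildcard positions and for every adversarial query, is the technical core of the proof.
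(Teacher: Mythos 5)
Your overall plan --- port the Boneh--Waters hybrid argument, using Assumption \ref*{sec:bw-hve}-3 for the position-wise randomization of the ciphertext, Assumption \ref*{sec:bw-hve}-2 for the pairing-side blinding, and Assumption \ref*{sec:bw-hve}-1 to mask the message --- is exactly what the paper intends: its own proof of this lemma is a one-line appeal to \cite{BonehW07} together with the observation that the three converted assumptions play the roles of DBDH, BSD, and C3DH under the vector operations of Definition \ref{def:vector-ops}. However, your hybrid sequence has a genuine structural gap: you randomize \emph{all} $\ell$ ciphertext positions and then the message mask, ending in a game whose challenge ciphertext is independent of $\gamma$. That endpoint is unreachable when the adversary legitimately holds a matching token. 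If some query $\vec{\sigma}_j$ satisfies $f_{\vec{\sigma}_j}(\vec{x}_0)=f_{\vec{\sigma}_j}(\vec{x}_1)=1$ (which is allowed, subject to $M_0=M_1$), the adversary can run \textbf{Query} on the challenge ciphertext and test whether it recovers $M_0$. In the hop that randomizes a position $i$ with $\sigma_{j,i}=x_{0,i}=x_{1,i}\neq *$, and again in your game $G_3$, this test succeeds in one game and fails in the other, so the two games are distinguishable by a \emph{legal} adversary; no reduction to a presumed-hard assumption can justify such a hop. Concretely, your simulator cannot manufacture that token: the ``parallel case using the common attribute'' is exactly where the cancellation trick has no handle, because there is no index $k$ with $\sigma_{j,k}\neq x_{\gamma,k}$ to absorb the unknown exponent.

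The missing idea is the case analysis of \cite{BonehW07}. One randomizes only the positions $i$ with $x_{0,i}\neq x_{1,i}$; at such a position any non-wildcard legal token is forced to have $f_{\vec{\sigma}_j}(\vec{x}_0)=f_{\vec{\sigma}_j}(\vec{x}_1)=0$, which guarantees a coordinate where $\sigma_{j,k}\neq x_{\gamma,k}$ and hence the cancellation you describe. The DBDH step that randomizes $C_0$ is applied only in the sub-case where no matching token is ever requested; in the complementary sub-case $M_0=M_1$, message hiding is vacuous and attribute hiding at the differing positions already makes the challenge independent of $\gamma$. With that case split restored, the rest of your outline coincides with the argument the paper imports from \cite{BonehW07}.
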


\begin{proof}
The proof of this lemma is directly obtained from \cite{BonehW07} since the
Assumptions \ref*{sec:bw-hve}-1, \ref*{sec:bw-hve}-2, and \ref*{sec:bw-hve}-2
in prime order bilinear groups are correspond to the DBDH, BSD, and C3DH
assumptions in composite order bilinear groups. That is, the proof of
\cite{BonehW07} can be exactly simulated using the vector operations in the
Definition \ref{def:vector-ops} and the Assumptions \ref*{sec:bw-hve}-1,
\ref*{sec:bw-hve}-2, and \ref*{sec:bw-hve}-3.
\end{proof}

\begin{lemma} \label{lem:bw-hve-a1}
If the DBDH assumption holds, then the Assumption \ref*{sec:bw-hve}-1 also
holds.
\end{lemma}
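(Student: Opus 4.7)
The plan is to give a direct, tight reduction from DBDH to Assumption \ref*{sec:bw-hve}-1. Given a DBDH challenger's instance $(g,g^a,g^b,g^c,T)$, I will build an Assumption \ref*{sec:bw-hve}-1 instance by implicitly setting $c_1 = a$, $c_2 = b$, $c_3 = c$ and choosing a fresh random $\alpha \in \Z_p$ to play the role of the exponent $a$ appearing in the basis vectors $\vec{b}_{1,1} = (1,0)$, $\vec{b}_{1,2} = (1,\alpha)$, $\vec{b}_2 = (\alpha,-1)$. I rename the basis parameter to $\alpha$ to avoid collision with the DBDH exponent $a$.

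The bulk of the proof is the explicit computation showing every component of the Assumption \ref*{sec:bw-hve}-1 challenge is producible from $\{g, g^a, g^b, g^c, \alpha\}$ alone. Concretely, I will provide $g^{\vec{b}_{1,1}} = (g,1)$, $g^{\vec{b}_{1,2}} = (g, g^{\alpha})$, $g^{\vec{b}_2} = (g^{\alpha}, g^{-1})$, and then simulate the challenge elements as $(g^{\vec{b}_{1,1}})^{c_1} = (g^a, 1)$, $(g^{\vec{b}_{1,1}})^{c_2} = (g^b, 1)$, $(g^{\vec{b}_{1,1}})^{c_3} = (g^c, 1)$, $(g^{\vec{b}_{1,2}})^{c_1} = (g^a, (g^a)^{\alpha})$, and $(g^{\vec{b}_{1,2}})^{c_2} = (g^b, (g^b)^{\alpha})$. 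Each of these is computable because the simulator knows $\alpha$ in the clear and only needs $g^{c_i}$ (which are exactly $g^a, g^b, g^c$) for the other coordinates. Finally, I hand the DBDH target $T$ unchanged to the Assumption \ref*{sec:bw-hve}-1 solver: if $T = e(g,g)^{abc}$ then $T = e(g,g)^{c_1 c_2 c_3}$, and if $T$ is uniform in $\G_T$ then it is uniform as required for $T_1$.

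It remains to verify that the induced joint distribution of the simulated instance is identical to the real Assumption \ref*{sec:bw-hve}-1 distribution. This is immediate: $\alpha$ is uniform and independent of $a,b,c,d$, the basis vectors have the exact prescribed form, and $(c_1,c_2,c_3,d) = (a,b,c,d)$ is uniform over $\Z_p^4$ by the DBDH distribution. Therefore any PPT adversary solving Assumption \ref*{sec:bw-hve}-1 with advantage $\epsilon$ yields a PPT DBDH distinguisher with advantage $\epsilon$. Since nothing in the reduction depends on hidden parameters or hybrid arguments, the main technical burden is just presenting the coordinate-wise computations cleanly; I do not anticipate a real obstacle.
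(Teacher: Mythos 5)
Your reduction is correct and is essentially identical to the paper's own proof: the paper's simulator likewise samples the basis parameter $a$ itself, sets $g^{\vec{b}_{1,1}}=(g,1)$, $g^{\vec{b}_{1,2}}=(g,g^a)$, $g^{\vec{b}_2}=(g^a,g^{-1})$, embeds the DBDH exponents as $c_1,c_2,c_3$, and forwards $T$ unchanged. The only difference is cosmetic (you rename the basis exponent to $\alpha$ to avoid a variable clash), so there is nothing further to add.
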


\begin{proof}
Suppose there exists an adversary $\mc{A}$ that breaks the Assumption
\ref*{sec:bw-hve}-1 with a non-negligible advantage. An algorithm $\mc{B}$
that solves the DBDH assumption using $\mc{A}$ is given: a challenge tuple $D
= ((p, \G, \G_T, e), g, g^{c_1}, g^{c_2}, g^{c_3})$ and $T$ where $T = T_0 =
e(g,g)^{c_1 c_2 c_3}$ or $T = T_1 = e(g,g)^d$. $\mc{B}$ first chooses random
values $a \in \Z_p$ and computes
    \begin{align*}
    &   g^{\vec{b}_{1,1}} = (g, 1),~
        g^{\vec{b}_{1,2}} = (g, g^a),~
        g^{\vec{b}_2} = (g^a, g^{-1}),~ \\
    &   (g^{\vec{b}_{1,1}})^{c_1} = (g^{c_1}, 1),~
        (g^{\vec{b}_{1,1}})^{c_2} = (g^{c_2}, 1),~
        (g^{\vec{b}_{1,1}})^{c_3} = (g^{c_3}, 1),~ \\
    &   (g^{\vec{b}_{1,2}})^{c_1} = (g^{c_1}, (g^{c_1})^a),~
        (g^{\vec{b}_{1,2}})^{c_2} = (g^{c_2}, (g^{c_2})^a).
    \end{align*}
Next, it gives the tuple
    $D' = ((p,\G, \G_T, e),
    g^{\vec{b}_{1,1}}, g^{\vec{b}_{1,2}}, g^{\vec{b}_2},
    (g^{\vec{b}_{1,1}})^{c_1}, (g^{\vec{b}_{1,1}})^{c_2},
    (g^{\vec{b}_{1,2}})^{c_1}, (g^{\vec{b}_{1,2}})^{c_2},
    (g^{\vec{b}_{1,1}})^{c_3})$ and $T$
to $\mc{A}$. Then $\mc{A}$ outputs a guess $\gamma'$. $\mc{B}$ also outputs
$\gamma'$. If the advantage of $\mc{A}$ is $\epsilon$, then the advantage of
$\mc{B}$ is greater than $\epsilon$ since the distribution of the challenge
tuple to $\mc{A}$ is equal to the Assumption \ref*{sec:bw-hve}-1.
\end{proof}

\begin{lemma} \label{lem:bw-hve-a2}
The Assumption \ref*{sec:bw-hve}-2 holds for all adversaries.
\end{lemma}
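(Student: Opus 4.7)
The plan is to show that Assumption \ref*{sec:bw-hve}-2 is information-theoretic: the two candidate values $T_0$ and $T_1$ coincide as group elements for every choice of $c_1,c_2,c_3\in\Z_p$, so no adversary (even unbounded) can distinguish them with probability greater than $1/2$.

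First I would expand both $T_0$ and $T_1$ via the bilinear pairing on product groups from Definition \ref{def:bilinear-prod}, which gives $e(g^{\vec{u}},g^{\vec{v}}) = e(g,g)^{\vec{u}\cdot\vec{v}}$. Writing out the exponents,
\begin{align*}
T_0 &= e(g,g)^{(c_1 \vec{b}_{1,1} + c_3 \vec{b}_2)\cdot(c_2 \vec{b}_{1,2})} = e(g,g)^{c_1 c_2\,\vec{b}_{1,1}\cdot\vec{b}_{1,2} \;+\; c_2 c_3\,\vec{b}_2\cdot\vec{b}_{1,2}}, \\
T_1 &= e(g,g)^{c_1 c_2\,\vec{b}_{1,1}\cdot\vec{b}_{1,2}},
\end{align*}
so the only question is whether $\vec{b}_2\cdot\vec{b}_{1,2}=0$.

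Next I would simply plug in the fixed basis vectors $\vec{b}_{1,2}=(1,a)$ and $\vec{b}_{2}=(a,-1)$ to compute $\vec{b}_2\cdot\vec{b}_{1,2} = a\cdot 1 + (-1)\cdot a = 0$. This is exactly the orthogonality relation between the subgroups $\langle g^{\vec{b}_{1,2}}\rangle$ and $\langle g^{\vec{b}_2}\rangle$ in the sense of Definition \ref{def:orth}, which is the whole reason these basis vectors were chosen in Section \ref{sec:bw-hve}. Consequently $T_0 = T_1$ identically in $\G_T$, regardless of the value of $c_3$.

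Finally I would conclude that the distributions $(D,T_0)$ and $(D,T_1)$ are literally the same distribution over $\G^{*}\times\G_T$, so $\Pr[\mc{A}(D,T_0)=1] = \Pr[\mc{A}(D,T_1)=1]$ for every algorithm $\mc{A}$, giving advantage exactly $0$. There is no real obstacle here; the lemma is the formal reflection of the fact that in our conversion the ``bilinear subgroup decision'' assumption of Boneh--Waters collapses to a tautology once the orthogonal basis vectors are fixed, which is precisely the simplification promised in the introduction.
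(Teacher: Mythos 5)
Your proposal is correct and follows essentially the same route as the paper: both reduce the claim to the orthogonality $e(g^{\vec{b}_2}, g^{\vec{b}_{1,2}})=1$, i.e.\ $\vec{b}_2\cdot\vec{b}_{1,2}=a-a=0$, which makes $T_0$ and $T_1$ identical group elements and hence the advantage exactly zero. You merely spell out the inner-product computation that the paper leaves implicit.
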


\begin{proof}
The equation $e((g^{\vec{b}_{1,1}})^{c_1} (g^{\vec{b}_2})^{c_3},
(g^{\vec{b}_{1,2}})^{c_2}) = e((g^{\vec{b}_{1,1}})^{c_1},
(g^{\vec{b}_{1,2}})^{c_2})$ holds by the orthogonality of basis vectors such
that $e(g^{\vec{b}_{2}}, g^{\vec{b}_{1,2}}) = 1$. Therefore, any adversary
can not break the Assumption \ref*{sec:bw-hve}-2.
\end{proof}

\begin{lemma} \label{lem:bw-hve-a3}
If the P3DH assumption holds, then the Assumption \ref*{sec:bw-hve}-3 also
holds.
\end{lemma}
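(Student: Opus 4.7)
The plan is a direct reduction: if $\mc{A}$ distinguishes $T_0$ from $T_1$ in Assumption \ref*{sec:bw-hve}-3 with non-negligible advantage, I build $\mc{B}$ that solves P3DH with essentially the same advantage. Given the P3DH challenge $((p,\G,\G_T,e),(g,f),(g^a,f^a),(g^b,f^b),(g^{ab}f^{z_1},g^{z_1}),(g^{abc}f^{z_2},g^{z_2}),T)$, the reduction implicitly identifies the basis parameter (also called $a$ in the bilinear product group, which I rename $\alpha$ to avoid collision) with $\alpha = \log_g f$, and reuses the P3DH letters by setting $c_1 = a,~ c_2 = b,~ c_3 = c$. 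Under this identification, $g^{\vec{b}_{1,1}} = (g,1)$, $g^{\vec{b}_{1,2}} = (g,f)$, and $g^{\vec{b}_2} = (f,g^{-1})$ are all directly computable from the P3DH tuple; likewise $(g^{\vec{b}_{1,2}})^{c_1} = (g^a,f^a)$ and $(g^{\vec{b}_{1,2}})^{c_2} = (g^b,f^b)$ appear verbatim in the tuple.

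The only elements requiring care are those carrying a $\vec{b}_2$ blinding factor. Expanding by Definition \ref{def:vector-ops} gives $(g^{\vec{b}_{1,1}})^{c_1 c_2}(g^{\vec{b}_2})^{z_1^{*}} = (g^{ab} f^{z_1^{*}},~ g^{-z_1^{*}})$, while P3DH supplies $(g^{ab} f^{z_1},~ g^{z_1})$. I match them by inverting the second coordinate and setting $z_1^{*} = z_1$; the same move handles $(g^{\vec{b}_{1,1}})^{c_1 c_2 c_3}(g^{\vec{b}_2})^{z_2^{*}}$ from the fifth P3DH element with $z_2^{*} = z_2$. For the target, $\mc{B}$ rewrites $T = (X,~ g^{z_3})$ as $T' = (X,~ g^{-z_3})$: this is exactly $(g^{\vec{b}_{1,1}})^{c_3}(g^{\vec{b}_2})^{z_3^{*}}$ when $X = g^c f^{z_3}$ and $(g^{\vec{b}_{1,1}})^{d}(g^{\vec{b}_2})^{z_3^{*}}$ when $X = g^d f^{z_3}$, with $z_3^{*} = z_3$ in both cases. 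Finally $\mc{B}$ passes the assembled tuple to $\mc{A}$ and echoes $\mc{A}$'s guess.

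The one thing to double-check is distributional correctness: each $z_i^{*}$ must remain uniform and independent in $\Z_p$, which is immediate because negation is a bijection on $\Z_p$ and the P3DH blinders $z_1,z_2,z_3$ are already i.i.d.\ uniform; also $\alpha$ is uniform in $\Z_p$ because $f$ is a random generator, matching the distribution of the basis parameter in Assumption \ref*{sec:bw-hve}-3. I do not foresee any genuine obstacle here; the only nuisance is the bookkeeping around the sign flip in the second coordinate, after which $\mc{B}$ inherits the full advantage of $\mc{A}$.
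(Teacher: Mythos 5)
Your reduction is correct and is essentially identical to the paper's own proof: the same embedding $a=\log_g f$, the same basis elements $(g,1),(g,f),(f,g^{-1})$, and the same inversion of the second coordinate to turn the P3DH blinders $(\,\cdot\,,g^{z_i})$ into $(g^{\vec{b}_2})^{z_i}$ components. Your explicit treatment of the sign flip on the challenge term $T$ is a small point the paper leaves implicit, but it does not constitute a different approach.
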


\begin{proof}
Suppose there exists an adversary $\mc{A}$ that breaks the Assumption
\ref*{sec:bw-hve}-3 with a non-negligible advantage. An algorithm $\mc{B}$
that solves the P3DH assumption using $\mc{A}$ is given: a challenge tuple
    $D = ((p, \G, \G_T, e), (g,f),
    (g^{c_1}, f^{c_1}), (g^{c_2}, f^{c_2}),
    (g^{c_1 c_2} f^{z_1}, g^{z_1}), (g^{c_1 c_2 c_3} f^{z_2}, g^{z_2}))$
and $T$ where $T = T_0 = (g^{c_3} f^{z_3}, g^{z_3})$ or $T = T_1 = (g^d
f^{z_3}, g^{z_3})$. $\mc{B}$ first computes
    \begin{align*}
    &   g^{\vec{b}_{1,1}} = (g, 1),~
        g^{\vec{b}_{1,2}} = (g, f),~
        g^{\vec{b}_2}     = (f, g^{-1}),~ \\
    &   (g^{\vec{b}_{1,2}})^{c_1} = (g^{c_1}, f^{c_1}),~
        (g^{\vec{b}_{1,2}})^{c_2} = (g^{c_2}, f^{c_2}),~ \\
    &   (g^{\vec{b}_{1,1}})^{c_1 c_2} (g^{\vec{b}_2})^{z_1}
            = (g^{c_1 c_2} f^{z_1}, (g^{z_1})^{-1}),~
        (g^{\vec{b}_{1,1}})^{c_1 c_2 c_3} (g^{\vec{b}_2})^{z_2}
            = (g^{c_1 c_2 c_3} f^{z_2}, (g^{z_2})^{-1}).
    \end{align*}
Intuitively, it sets $a = \log f$. Next, it gives the tuple
    $D' = ((p,\G, \G_T, e),
    g^{\vec{b}_{1,1}}, g^{\vec{b}_{1,2}}, g^{\vec{b}_2},
    (g^{\vec{b}_{1,1}})^{c_1}, (g^{\vec{b}_{1,1}})^{c_2}, \lb
    (g^{\vec{b}_{1,2}})^{c_1}, (g^{\vec{b}_{1,2}})^{c_2},
    (g^{\vec{b}_{1,1}})^{c_1 c_2 c_3})$ and $T$
to $\mc{A}$. Then $\mc{A}$ outputs a guess $\gamma'$. $\mc{B}$ also outputs
$\gamma'$. If the advantage of $\mc{A}$ is $\epsilon$, then the advantage of
$\mc{B}$ is greater than $\epsilon$ since the distribution of the challenge
tuple to $\mc{A}$ is equal to the Assumption \ref*{sec:bw-hve}-3.
\end{proof}

\section{Conversion 2: SW-dHVE} \label{sec:sw-dhve}

In this section, we convert the delegatable HVE scheme of Shi and Waters
\cite{ShiW08} to prime order bilinear groups and prove its selective security
under the DBDH and P3DH assumptions.

\subsection{Construction}

Let $\Sigma$ be a finite set of attributes and let $?, *$ be two special
symbol not in $\Sigma$. Define $\Sigma_{?,*} = \Sigma \cup \{?,*\}$. The
symbol $?$ denotes a delegatable field, i.e., a field where one is allowed to
fill in an arbitrary value and perform delegation. The symbol $*$ denotes a
wild-card field or ``don't care'' field.

\begin{description}
\item [\tb{Setup}($1^{\lambda}, \ell$):] It first generates the bilinear
    group $\G$ of prime order $p$ of bit size $\Theta(\lambda)$. It chooses
    random values $a_1, a_2, a_3 \in \Z_p$ and sets basis vectors for
    bilinear product groups as $\vec{b}_{1,1} = (1, 0, a_1),~ \vec{b}_{1,2}
    = (1, a_2, 0),~ \vec{b}_2 = (a_2, -1, a_1 a_2 - a_3),~ \vec{b}_3 =
    (a_1, a_3, -1)$.
    It also sets
    \begin{align*}
    &   \vec{B}_{1,1} = g^{\vec{b}_{1,1}},~
        \vec{B}_{1,2} = g^{\vec{b}_{1,2}},~
        \vec{B}_2 = g^{\vec{b}_2},~
        \vec{B}_3 = g^{\vec{b}_3}.
    \end{align*}
    It selects random exponents $v', w'_1, w'_2, \{ u'_i, h'_i
    \}_{i=1}^\ell, \alpha \in \Z_p$, $z_v, z_{w,1}, z_{w,2}, \{ z_{u,i},
    z_{h,i} \}_{i=1}^\ell \in \Z_p$ and outputs a secret key and a public
    key as
    \begin{align*}
    SK = \Big(~
    &   \vec{V}_k = \vec{B}_{1,2}^{v'},
        \vec{W}_{k,1} = \vec{B}_{1,2}^{w'_1},
        \vec{W}_{k,2} = \vec{B}_{1,2}^{w'_2},~
        \big\{
        \vec{U}_{k,i} = \vec{B}_{1,2}^{u'_i},
        \vec{H}_{k,i} = \vec{B}_{1,2}^{h'_i}
        \big\}_{i=1}^\ell,~
        \vec{B}_{1,2}^{\alpha}
        \Big), \db \\
    PK = \Big(~
    &   \vec{B}_{1,1},~ \vec{B}_{1,2},~ \vec{B}_2,~ \vec{B}_3,~
        \vec{V}_{c} = \vec{B}_{1,1}^{v'} \vec{B}_2^{z_v},~
        \vec{W}_{c,1} = \vec{B}_{1,1}^{w'_1} \vec{B}_2^{z_{w,1}},~
        \vec{W}_{c,2} = \vec{B}_{1,2}^{w'_2} \vec{B}_2^{z_{w,2}},~ \\
    &   \big\{
        \vec{U}_{c,i} = \vec{B}_{1,1}^{u'_i} \vec{B}_2^{z_{u,i}},~
        \vec{H}_{c,i} = \vec{B}_{1,1}^{h'_i} \vec{B}_2^{z_{h,i}}
        \big\}_{i=1}^\ell,~
        \Omega = e(\vec{B}_{1,1}^{v'}, \vec{B}_{1,2})^{\alpha} ~\Big).
    \end{align*}

\item [\tb{GenToken}($\vec{\sigma}, SK, PK$):] It takes as input an
    attribute vector $\vec{\sigma} = (\sigma_1, \ldots, \sigma_\ell) \in
    \Sigma_{?,*}^\ell$ and the secret key $SK$.
    \begin{enumerate}
    \item Let $S$ be the set of indexes that are not delegatable fields
        and wild-card fields in the vector $\vec{\sigma}$. It first
        selects random exponents $r_1, r_2, \{ r_{3,i} \}_{i \in S} \in
        \Z_p$ and random blinding values $y_1, y_2, y_3, \{ y_{4,i} \}_{i
        \in S} \in \Z_p$. Then it computes decryption components as
        \begin{align*}
        &   \vec{K}_1 = \vec{B}_{1,2}^{\alpha}
            \vec{W}_{k,1}^{r_1} \vec{W}_{k,2}^{r_2}
            \prod_{i \in S} (\vec{U}_{k,i}^{\sigma_i} \vec{H}_{k,i})^{r_{3,i}}
            \vec{B}_3^{y_1},~
            \vec{K}_2 = \vec{V}_k^{-r_1} \vec{B}_3^{y_2},~
            \vec{K}_3 = \vec{V}_k^{-r_2} \vec{B}_3^{y_3},~ \\
        &   \big\{
            \vec{K}_{4,i} = \vec{V}_k^{-r_{3,i}} \vec{B}_3^{y_{4,i}}
            \big\}_{i \in S}.
        \end{align*}

    \item Let $S_?$ be the set of indexes that are delegatable fields. It
        selects random exponents $\{ s_{1,j}, s_{2,j}, \lb \{ s_{3,j,i}
        \} \} \in \Z_p$ and random blinding values $\{ y_{1,j,u},
        y_{1,j,h}, y_{2,j}, y_{3,j}, \{ y_{4,j,i} \} \} \in \Z_p$. Next,
        it computes delegation components as
        \begin{align*}
        \forall j \in S_? :~
        &   \vec{L}_{1,j,u} = \vec{U}_{k,i}^{s_{3,j,j}} \vec{B}_3^{y_{1,j,u}},~
            \vec{L}_{1,j,h} =
                \vec{W}_{k,1}^{s_{1,j}} \vec{W}_{k,2}^{s_{2,j}}
                \prod_{i \in S} (\vec{U}_{k,i}^{\sigma_i} \vec{H}_{k,i})^{s_{3,j,i}}
                \vec{H}_{k,j}^{s_{3,j,j}} \vec{B}_3^{y_{1,j,h}},~ \\
        &   \vec{L}_{2,j} = \vec{V}_k^{-s_{1,j}} \vec{B}_3^{y_{2,j}},~
            \vec{L}_{3,j} = \vec{V}_k^{-s_{2,j}} \vec{B}_3^{y_{3,j}},~
            \big\{
            \vec{L}_{4,j,i} = \vec{V}_k^{-s_{3,j,i}} \vec{B}_3^{y_{4,j,i}}
            \big\}_{i \in S \cup \{j\}}.
        \end{align*}

    \item Finally, it outputs a token as
        \begin{align*}
        TK_{\vec{\sigma}} = \Big(
            \vec{K}_1,~ \vec{K}_2,~ \vec{K}_3,~ \{ \vec{K}_{4,i} \}_{i \in S},~
            \big\{ \vec{L}_{1,j,u},~ \vec{L}_{1,j,h},~ \vec{L}_{2,j},~
                \vec{L}_{3,j},~ \{ \vec{L}_{4,j,i} \}_{i \in S \cup \{j\}}
            \big\}_{j \in S_?}
        \Big).
        \end{align*}
    \end{enumerate}

\item [\tb{Delegate}($\vec{\sigma}', TK_{\vec{\sigma}}, PK$):] It takes as
    input an attribute vector $\vec{\sigma}' = (\sigma_1, \ldots,
    \sigma_\ell) \in \Sigma_{?,*}^\ell$ and a token $TK_{\vec{\sigma}}$.
    Without loss of generality, we assume that $\sigma'$ fixes only one
    delegatable field of $\sigma$. It is clear that we can perform
    delegation on multiple fields if we have an algorithm to perform
    delegation on one field. Suppose $\sigma'$ fixes the $k$-th index of
    $\sigma$.
    \begin{enumerate}
    \item If the $k$-th index of $\sigma'$ is set to $*$, that is, a
        wild-card field, then it can perform delegation by simply
        removing the delegation components that correspond to $k$-th
        index.

    \item Otherwise, that is, if the $k$-th index of $\sigma'$ is set to
        some value in $\Sigma$, then it perform delegation as follows:
        \begin{enumerate}
        \item Let $S$ be the set of indexes that are not delegatable fields
            and wild-card fields in the vector $\vec{\sigma'}$. Note that
            $k \in S$. It selects random exponents $\mu, y_1, y_2, y_3, \{
            y_{4,i} \}_{i \in S} \in \Z_p$ and updates the token as
            \begin{align*}
            &   \vec{K}'_1 = \vec{K}_1
                (\vec{L}_{1,k,u}^{\sigma_k} \vec{L}_{1,k,h})^{\mu} \vec{B}_3^{y_1},~
                \vec{K}'_2 = \vec{K}_2 \vec{L}_{2,k}^{\mu} \vec{B}_3^{y_2},~
                \vec{K}'_3 = \vec{K}_3 \vec{L}_{3,k}^{\mu} \vec{B}_3^{y_3},~ \\
            &   \vec{K}'_{4,k} = \vec{L}_{4,k,k}^{\mu} \vec{B}_3^{y_{4,k}},~
                \big\{
                \vec{K}'_{4,i} = \vec{K}_{4,i} \vec{L}_{4,k,i}^{\mu} \vec{B}_3^{y_{4,i}}
                \big\}_{i \in S \setminus \{k\}}.
            \end{align*}

        \item Let $S_?$ be the set of indexes that are delegatable fields
            in the vector $\vec{\sigma'}$. It selects random exponents $\{
            \tau_j, y_{1,j,u}, y_{1,j,h}, y_{2,j}, y_{3,j}, \{ y_{4,j,i}
            \}_{i \in S \cup \{j\}} \}_{j \in S_?} \in \Z_p$ and
            re-randomize the delegation components of the token as
            \begin{align*}
            \forall j \in S_? :~
            &   \vec{L}'_{1,j,u} = \vec{L}_{1,j,u}^{\mu} \vec{B}_3^{y_{1,j,u}},~
                \vec{L}'_{1,j,h} = \vec{L}_{1,j,h}^{\mu}
                    (\vec{L}_{1,k,u}^{\sigma_k} \vec{L}_{1,k,h})^{\tau_j}
                    \vec{B}_3^{y_{1,j,h}},~ \\
            &   \vec{L}'_{2,j} = \vec{L}_{2,j}^{\mu} \vec{L}_{2,j}^{\tau_j}
                    \vec{B}_3^{y_{2,j}},~
                \vec{L}'_{3,j} = \vec{L}_{3,j}^{\mu} \vec{L}_{3,j}^{\tau_j}
                    \vec{B}_3^{y_{3,j}},~ \\
            &   \vec{L}'_{4,j,j} = \vec{L}_{4,j,j}^{\mu} \vec{B}_3^{y_{4,j,j}},~
                \vec{L}'_{4,j,k} = \vec{L}_{4,j,k}^{\tau_j} \vec{B}_3^{y_{4,j,k}},~
                \big\{
                \vec{L}'_{4,j,i} = \vec{L}_{4,j,i}^{\mu} \vec{L}_{4,j,k}^{\tau_j}
                    \vec{B}_3^{y_{4,j,i}}
                \big\}_{i \in S \setminus \{k\}}.
            \end{align*}

        \item Finally, it outputs a token as
            \begin{align*}
            TK_{\vec{\sigma}'} = \Big(
                \vec{K}'_1,~ \vec{K}'_2,~ \vec{K}'_3,~
                \{ \vec{K}'_{4,i} \}_{i \in S},~
                \big\{
                \vec{L}'_{1,j,h}, \vec{L}'_{1,j,u},~
                \vec{L}'_{2,j},~ \vec{L}'_{3,j},~
                \{ \vec{L}'_{4,j,i} \}_{i \in S \cup \{j\}}
                \big\}_{j \in S_?}
            \Big).
            \end{align*}
        \end{enumerate}
    \end{enumerate}

\item [\tb{Encrypt}($\vec{x}, M, PK$):] It takes as input an attribute
    vector $\vec{x} = (x_1, \ldots, x_\ell) \in \Sigma^\ell$, a message $M
    \in \mathcal{M} \subseteq \G_T$, and the public key $PK$. It first
    chooses a random exponent $t \in \Z_p$ and random blinding values $z_1,
    z_2, z_3, \{ z_{4,i} \}_{i=1}^\ell \in \Z_p$. Then it outputs a
    ciphertext as
    \begin{align*}
    CT = \Big(~
        C_0 = \Omega^t M,~
        \vec{C}_1 = \vec{V}_c^t \vec{B}_2^{z_1},~
        \vec{C}_2 = \vec{W}_{c,1}^t \vec{B}_2^{z_2},~
        \vec{C}_3 = \vec{W}_{c,2}^t \vec{B}_2^{z_3},~
        \big\{
        \vec{C}_{4,i} = (\vec{U}_{c,i}^{x_i} \vec{H}_{c,i})^t \vec{B}_2^{z_{4,i}}
        \big\}_{i=1}^l
    ~\Big).
    \end{align*}

\item [\tb{Query}($CT, TK_{\vec{\sigma}}, PK$):] It takes as input a
    ciphertext $CT$ and a token $TK_{\vec{\sigma}}$ of a vector
    $\vec{\sigma}$. It first computes
    \begin{align*}
    M \leftarrow C_0 \cdot \Big( e(\vec{C}_1, \vec{K}_1) \cdot
        e(\vec{C}_2, \vec{K}_2) \cdot e(\vec{C}_3, \vec{K}_3) \cdot
        \prod_{i \in S} e(\vec{C}_{4,i}, \vec{K}_{4,i}) \Big)^{-1}.
    \end{align*}
    If $M \notin \mathcal{M}$, it outputs $\perp$ indicating that the
    predicate $f_{\vec{\sigma}}$ is not satisfied. Otherwise, it outputs
    $M$ indicating that the predicate $f_{\vec{\sigma}}$ is satisfied.
\end{description}

\subsection{Correctness}

If $f_{\vec{\sigma}}(\vec{x}) = 1$, then the following calculation shows that
$\textbf{Query}(CT, TK_{\vec{\sigma}}, PK) = M$ by the orthogonality of basis
vectors such that $e(g^{\vec{b}_{1,1}}, g^{\vec{b}_3}) = 1,
e(g^{\vec{b}_{1,2}}, g^{\vec{b}_2}) = 1, e(g^{\vec{b}_2}, g^{\vec{b}_3}) =
1$.
    \begin{align*}
    \lefteqn{ e(\vec{C}_1, \vec{K}_1) \cdot e(\vec{C}_2, \vec{K}_2) \cdot
        e(\vec{C}_3, \vec{K}_3) \cdot
        \prod_{i \in S} e(\vec{C}_{4,i}, \vec{K}_{4,i}) } \\
    &=  e(({\vec{V}_c})^t, \vec{B}_{1,2}^{\alpha} \vec{W}_{k,1}^{r_1} \vec{W}_{k,2}^{r_2}
            \prod_{i \in S} (\vec{U}_{c,i}^{\sigma_i} \vec{H}_{c,i})^{r_{3,i}}) \cdot
        e(\vec{W}_{c,1}^t, \vec{V}_k^{-r_1}) \cdot
        e(\vec{W}_{c,2}^t, \vec{V}_k^{-r_2}) \cdot
        \prod_{i \in S} e((\vec{U}_{c,i}^{x_i} \vec{H}_{c,i})^t, \vec{V}_k^{-r_{3,i}}) \\
    &=  e(\vec{B}_{1,1}^{v' t}, \vec{B}_{1,2}^{\alpha}) \cdot
        \prod_{i \in S} e(g^{v'}, g^{u'_i (\sigma_i - x_i)})^{t r_{3,i}}
     =  e(\vec{B}_{1,1}^{v'}, \vec{B}_{1,2})^{\alpha t}.
    \end{align*}
Otherwise, that is $f_{\vec{\sigma}}(\vec{x}) = 0$, the probability of
$\textbf{Query}(CT, TK_{\vec{\sigma}}, PK) \neq \perp$ is negligible by
limiting $|\mathcal{M}|$ to less than $|\G_T|^{1/4}$.

\subsection{Security}

\begin{theorem} \label{thm:sw-dhve}
The above dHVE scheme is selectively secure under the DBDH and P3DH
assumptions.
\end{theorem}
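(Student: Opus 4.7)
The plan is to follow the same three-step template used for Theorem \ref{thm:bw-hve}, adapted to the richer three-vector setting $(\vec{b}_{1,1},\vec{b}_{1,2},\vec{b}_2,\vec{b}_3)$ that replaces the two-vector setting of Conversion 1. First I would identify the composite-order assumptions used by Shi and Waters in \cite{ShiW08}: their proof over bilinear groups of order $N=p_1 p_2 p_3$ relies on the DBDH, bilinear subgroup decision (BSD), and C3DH assumptions (with the subgroup $\G_{p_3}$ playing the role of re-randomization for the delegation components, and $\G_{p_2}$ playing the role of anonymity blinding in the ciphertext). The overall claim would be proved as a chain of four lemmas parallel to Lemmas \ref{lem:bw-hve}--\ref{lem:bw-hve-a3}.

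Second, I would translate each composite-order assumption into a prime-order analogue over the bilinear product groups spanned by $\vec{b}_{1,1}=(1,0,a_1)$, $\vec{b}_{1,2}=(1,a_2,0)$, $\vec{b}_{2}=(a_2,-1,a_1 a_2 - a_3)$, $\vec{b}_{3}=(a_1,a_3,-1)$. This gives three explicit assumptions in the product-group setting: an Assumption \ref*{sec:sw-dhve}-1 of DBDH flavor using the $\vec{b}_{1,1},\vec{b}_{1,2}$ basis; an Assumption \ref*{sec:sw-dhve}-2 of BSD flavor, which follows unconditionally from the pairwise orthogonality relations $e(\vec{B}_{1,1},\vec{B}_3)=e(\vec{B}_{1,2},\vec{B}_2)=e(\vec{B}_2,\vec{B}_3)=1$ built into the basis; and two C3DH-flavored assumptions, Assumptions \ref*{sec:sw-dhve}-3a and \ref*{sec:sw-dhve}-3b, where the blinding subgroup is generated by $\vec{b}_2$ (for ciphertext components) and by $\vec{b}_3$ (for token/delegation components), respectively. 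The first lemma, corresponding to Lemma \ref{lem:bw-hve}, is then an immediate transcription of the Shi--Waters proof: every line of their simulation becomes a vector line using Definition \ref{def:vector-ops}, and the orthogonality guarantees of Definition \ref{def:orth} replace the subgroup orthogonality modulo $p_i$.

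Third, I would reduce each of the prime-order assumptions to DBDH or P3DH. Assumption \ref*{sec:sw-dhve}-1 reduces to DBDH exactly as in Lemma \ref{lem:bw-hve-a1}: given $(g,g^{c_1},g^{c_2},g^{c_3},T)$, choose random $a_1,a_2,a_3$ and embed the three-coordinate basis vectors by direct computation. Assumption \ref*{sec:sw-dhve}-2 is information-theoretic, handled exactly as Lemma \ref{lem:bw-hve-a2}. The two C3DH-style assumptions each reduce to P3DH in the style of Lemma \ref{lem:bw-hve-a3}: using the first coordinate of the challenge P3DH tuple $(g^{ab}f^{z_1},g^{z_1})$ and $(g^{abc}f^{z_2},g^{z_2})$ together with a fresh random element playing the role of the \emph{other} blinding basis vector, one can program the simulation so that $f$ plays the role of $a_2$ (for the $\vec{b}_2$ blinding variant) or of $a_3$ (for the $\vec{b}_3$ blinding variant), while the orthogonal coordinates are filled using fresh random exponents the reduction picks itself.

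The main obstacle will be the reduction for the $\vec{b}_3$-blinded variant. Unlike the two-vector case of Section \ref{sec:bw-hve}, here the basis vectors couple all three coordinates in a nontrivial way (note the entry $a_1 a_2 - a_3$ in $\vec{b}_2$), so a single P3DH instance only gives one unknown exponent in the exponent field, and the remaining coordinates of $\vec{B}_2$ and $\vec{B}_3$ must be simulated without knowing $a_j$ explicitly. The trick, which I expect to work by analogy with Lemma \ref{lem:bw-hve-a3}, is that all coordinates of $\vec{B}_2,\vec{B}_3$ that depend on the unknown exponent appear only as known powers of $f$, and the coupling term $a_1 a_2 - a_3$ can be absorbed into the independently chosen $a_3$; once this bookkeeping is done, the distribution handed to $\mc{A}$ matches the prime-order assumption exactly, and the advantage carries over. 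Combining the four lemmas yields Theorem \ref{thm:sw-dhve}.
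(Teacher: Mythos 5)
Your proposal follows essentially the same route as the paper: the theorem is reduced to a transcription lemma (the Shi--Waters composite-order proof carried over verbatim to the product-group setting) together with separate reductions of a DBDH-flavored assumption to DBDH, an unconditional BSD-flavored assumption via orthogonality, and two C3DH-flavored assumptions---one blinded by $\vec{b}_2$ for ciphertexts and one by $\vec{b}_3$ for tokens---each to P3DH, exactly as in the paper's Lemmas for this section. The one detail to correct is in the token-side reduction: the paper programs $f$ into the \emph{first} entry of $\vec{b}_3$ (i.e., it sets $a_1 = \log f$, not $a_3$), because the P3DH element $(g^{c_1 c_2} f^{z_1}, g^{z_1})$ must land in the coordinate where the hidden payload and the blinding combine; the second coordinate is then filled as $(g^{c_1 c_2} f^{z_1})^{a_2} (g^{z_1})^{-a_3}$, which implicitly replaces $a_3$ by $a_1 a_2 - a_3$ --- precisely the absorption of the coupling term you anticipate.
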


\begin{proof}
The proof of this theorem is easily obtained from the following five Lemmas
\ref{lem:sw-dhve}, \ref{lem:sw-dhve-a1}, \ref{lem:sw-dhve-a2},
\ref{lem:sw-dhve-a3} and \ref{lem:sw-dhve-a4}. Before presenting the five
lemmas, we first introduce the following four assumptions. The HVE scheme of
Shi and Waters constructed in bilinear groups of composite order $N = p_1 p_2
p_3$, and its security was proven under the DBDH, BSD, and C3DH assumptions
\cite{ShiW08}. In composite order bilinear groups, the C3DH assumption imply
the $l$-C3DH assumption that was introduced in \cite{ShiW08}. However, this
implication is not valid in prime order bilinear groups since the basis
vectors for ciphertexts and tokens are different. Thus the C3DH assumption
for ciphertexts and the C3DH assumption for tokens should be treated as
differently. These assumptions in composite order bilinear groups are
converted to the following Assumptions \ref*{sec:sw-dhve}-1,
\ref*{sec:sw-dhve}-2, \ref*{sec:sw-dhve}-3, and \ref*{sec:sw-dhve}-4 using
our conversion method.
\end{proof}

\vs \noindent \textbf{Assumption \ref*{sec:sw-dhve}-1} Let $((p, \G, \G_T,
e), g^{\vec{b}_{1,1}}, g^{\vec{b}_{1,2}}, g^{\vec{b}_2}, g^{\vec{b}_3})$ be
the bilinear product group of basis vectors $\vec{b}_{1,1} = (1,0,a_1),
\vec{b}_{1,2} = (1,a_2,0), \vec{b}_2 = (a_2,-1,a_1 a_2 - a_3), \vec{b}_3 =
(a_1, a_3, -1)$. The Assumption \ref*{sec:sw-dhve}-1 is stated as follows:
given a challenge tuple
    $$D = \big( (p, \G, \G_T, e),~
    g^{\vec{b}_{1,1}}, g^{\vec{b}_{1,2}}, g^{\vec{b}_2}, g^{\vec{b}_3},
    (g^{\vec{b}_{1,1}})^{c_1}, (g^{\vec{b}_{1,1}})^{c_2},
    (g^{\vec{b}_{1,2}})^{c_1}, (g^{\vec{b}_{1,2}})^{c_2},
    (g^{\vec{b}_{1,1}})^{c_3} \big) \mbox{ and } T,$$
decides whether $T = T_0 = e(g, g)^{c_1 c_2 c_3}$ or $T = T_1 = e(g,g)^d$
with random choices of $c_1, c_2, c_3, d \in \Z_p$.

\vs \noindent \textbf{Assumption \ref*{sec:sw-dhve}-2} Let $((p, \G, \G_T,
e), g^{\vec{b}_{1,1}}, g^{\vec{b}_{1,2}}, g^{\vec{b}_2}, g^{\vec{b}_3})$ be
the bilinear product group of basis vectors $\vec{b}_{1,1} = (1,0,a_1),
\vec{b}_{1,2} = (1,a_2,0), \vec{b}_2 = (a_2,-1,a_1 a_2 - a_3), \vec{b}_3 =
(a_1, a_3, -1)$. The Assumption \ref*{sec:sw-dhve}-2 is stated as follows:
given a challenge tuple
    $$D = \big( (p, \G, \G_T, e),~
    g^{\vec{b}_{1,1}}, g^{\vec{b}_{1,2}}, g^{\vec{b}_2}, g^{\vec{b}_3}
    \big) \mbox{ and } T,$$
decides whether $T = T_0 = e((g^{\vec{b}_{1,1}})^{c_1} (g^{\vec{b}_2})^{c_3},
(g^{\vec{b}_{1,2}})^{c_2} (g^{\vec{b}_3})^{c_4})$ or $T = T_1 =
e((g^{\vec{b}_{1,1}})^{c_1}, (g^{\vec{b}_{1,2}})^{c_2})$ with random choices
of $c_1, c_2, c_3, c_4 \in \Z_p$.

\vs \noindent \textbf{Assumption \ref*{sec:sw-dhve}-3} Let $((p, \G, \G_T,
e), g^{\vec{b}_{1,1}}, g^{\vec{b}_{1,2}}, g^{\vec{b}_2}, g^{\vec{b}_3})$ be
the bilinear product group of basis vectors $\vec{b}_{1,1} = (1,0,a_1),
\vec{b}_{1,2} = (1,a_2,0), \vec{b}_2 = (a_2,-1,a_1 a_2 - a_3), \vec{b}_3 =
(a_1, a_3, -1)$. The Assumption \ref*{sec:sw-dhve}-3 is stated as follows:
given a challenge tuple
    \begin{align*}
    D = \big( (p, \G, \G_T, e),~
    g^{\vec{b}_{1,1}}, g^{\vec{b}_{1,2}}, g^{\vec{b}_2}, g^{\vec{b}_3},
    (g^{\vec{b}_{1,2}})^{c_1}, (g^{\vec{b}_{1,2}})^{c_2},
    (g^{\vec{b}_{1,1}})^{c_1 c_2} (g^{\vec{b}_2})^{z_1},
    (g^{\vec{b}_{1,1}})^{c_1 c_2 c_3} (g^{\vec{b}_2})^{z_2}
    \big) \mbox{ and } T,
    \end{align*}
decides whether $T = T_0 = (g^{\vec{b}_{1,1}})^{c_3} (g^{\vec{b}_2})^{z_3}$
or $T = T_1 = (g^{\vec{b}_{1,1}})^d (g^{\vec{b}_2})^{z_3}$ with random
choices of $c_1, c_2, c_3, d \in \Z_p$, and $z_1, z_2, z_3 \in \Z_p$.

\vs \noindent \textbf{Assumption \ref*{sec:sw-dhve}-4} Let $((p, \G, \G_T,
e), g^{\vec{b}_{1,1}}, g^{\vec{b}_{1,2}}, g^{\vec{b}_2}, g^{\vec{b}_3})$ be
the bilinear product group of basis vectors $\vec{b}_{1,1} = (1,0,a_1),
\vec{b}_{1,2} = (1,a_2,0), \vec{b}_2 = (a_2,-1,a_1 a_2 - a_3), \vec{b}_3 =
(a_1, a_3, -1)$. The Assumption \ref*{sec:sw-dhve}-4 is stated as follows:
given a challenge tuple
    \begin{align*}
    D = \big( (p, \G, \G_T, e),~
    g^{\vec{b}_{1,1}}, g^{\vec{b}_{1,2}}, g^{\vec{b}_2}, g^{\vec{b}_3},
    (g^{\vec{b}_{1,1}})^{c_1}, (g^{\vec{b}_{1,1}})^{c_2},
    (g^{\vec{b}_{1,2}})^{c_1 c_2} (g^{\vec{b}_3})^{z_1},
    (g^{\vec{b}_{1,2}})^{c_1 c_2 c_3} (g^{\vec{b}_3})^{z_2}
    \big) \mbox{ and } T,
    \end{align*}
decides whether $T = T_0 = (g^{\vec{b}_{1,2}})^{c_3} (g^{\vec{b}_3})^{z_3}$
or $T = T_1 = (g^{\vec{b}_{1,2}})^d (g^{\vec{b}_3})^{z_3}$ with random
choices of $c_1, c_2, c_3, d \in \Z_p$, and $z_1, z_2, z_3 \in \Z_p$.

\begin{lemma} \label{lem:sw-dhve}
The above dHVE scheme is selectively secure under the Assumptions
\ref*{sec:sw-dhve}-1, \ref*{sec:sw-dhve}-2, \ref*{sec:sw-dhve}-3, and
\ref*{sec:sw-dhve}-4.
\end{lemma}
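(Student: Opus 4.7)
The plan is to mirror the structure of Lemma \ref{lem:bw-hve}: rather than redo the hybrid argument from scratch, I would show that the selective-security proof of the Shi--Waters dHVE in composite order bilinear groups \cite{ShiW08} transcribes line by line to our prime order scheme once every composite order object is replaced by its bilinear product group counterpart. The four Assumptions \ref*{sec:sw-dhve}-1 through \ref*{sec:sw-dhve}-4 are precisely the prime order statements that the substituted hybrids require.

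First I would set up the translation dictionary. The subgroup $\G_{p_1}$ splits into two ``copies'' realized by the basis vectors $\vec{b}_{1,1}$ (inside the public key and ciphertext) and $\vec{b}_{1,2}$ (inside the secret key and tokens), while the random-blinding subgroups $\G_{p_2}$ and $\G_{p_3}$ are realized by $\vec{b}_2$ and $\vec{b}_3$ respectively. Under this dictionary, Assumption \ref*{sec:sw-dhve}-1 is DBDH transported through Definition \ref{def:bilinear-prod}; Assumption \ref*{sec:sw-dhve}-2 is the bilinear subgroup decision statement, which will become unconditional through the orthogonalities $e(g^{\vec{b}_{1,1}}, g^{\vec{b}_3}) = 1$, $e(g^{\vec{b}_{1,2}}, g^{\vec{b}_2}) = 1$, $e(g^{\vec{b}_2}, g^{\vec{b}_3}) = 1$ supplied by Definition \ref{def:orth}; and Assumptions \ref*{sec:sw-dhve}-3 and \ref*{sec:sw-dhve}-4 play the role of the C3DH and $\ell$-C3DH assumption used in \cite{ShiW08} to swap ciphertext- and token-side components in the hybrids. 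The setup, \tb{GenToken}, \tb{Delegate}, and challenge-ciphertext simulations from \cite{ShiW08} then go through unchanged after replacing each $g_{p_i}^{x}$ by the appropriate $(g^{\vec{b}_{\ast,\ast}})^{x}$ and each random blinder in $\G_{p_2}$ or $\G_{p_3}$ by a random power of $g^{\vec{b}_2}$ or $g^{\vec{b}_3}$, because every pairing that vanished in the composite order proof continues to vanish in prime order for the same orthogonality reasons used in the correctness calculation.

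The main obstacle, and the reason two separate assumptions \ref*{sec:sw-dhve}-3 and \ref*{sec:sw-dhve}-4 are needed instead of a single one, is that in composite order the subgroups $\G_{p_2}$ and $\G_{p_3}$ are structurally symmetric, so the C3DH statement on the ciphertext side (blinding in $\G_{p_2}$) and on the token side (blinding in $\G_{p_3}$) coincide after renaming primes and were collapsed in \cite{ShiW08}. In our prime order setting this symmetry is broken because ciphertext components live in the $\vec{b}_{1,1}$--$\vec{b}_2$ span while token and delegation components live in the $\vec{b}_{1,2}$--$\vec{b}_3$ span; the basis vectors and the orthogonalities that each hybrid exploits are genuinely different on the two sides. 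Once this split is recognized, the hybrid that modifies ciphertext components is driven by Assumption \ref*{sec:sw-dhve}-3, the hybrid that re-randomizes token or delegation components is driven by Assumption \ref*{sec:sw-dhve}-4, and every other step of the argument is a mechanical transcription. The subsequent Lemmas \ref{lem:sw-dhve-a1}--\ref{lem:sw-dhve-a4} would then reduce each of the four prime order assumptions to DBDH or P3DH in the base group, completing the chain.
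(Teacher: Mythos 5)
Your proposal is correct and takes essentially the same route as the paper: the paper's proof of this lemma is exactly the observation that the selective-security argument of Shi--Waters transcribes to the prime order scheme because Assumptions \ref*{sec:sw-dhve}-1 through \ref*{sec:sw-dhve}-4 correspond to the DBDH, BSD, C3DH-for-ciphertexts, and C3DH-for-tokens assumptions under the subgroup-to-basis-vector dictionary you describe. Your explanation of why the ciphertext-side and token-side C3DH statements must be split into two assumptions matches the remark the paper makes in the proof of Theorem \ref{thm:sw-dhve}.
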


\begin{proof}
The proof of this lemma is directly obtained from \cite{ShiW08} since the
Assumptions \ref*{sec:sw-dhve}-1, \ref*{sec:sw-dhve}-2, \ref*{sec:sw-dhve}-3,
and \ref*{sec:sw-dhve}-4 in prime order bilinear groups are correspond to the
DBDH, BSD, C3DH (for ciphertexts), and C3DH (for tokens) assumptions in
composite order bilinear groups.
\end{proof}

\begin{lemma} \label{lem:sw-dhve-a1}
If the DBDH assumption holds, then the Assumption \ref*{sec:sw-dhve}-1 also
holds.
\end{lemma}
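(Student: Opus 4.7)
The plan is to follow the same template as Lemma \ref{lem:bw-hve-a1}, adapted to the three-dimensional basis vectors of Section \ref{sec:sw-dhve}. Given an algorithm $\mc{A}$ that breaks Assumption \ref*{sec:sw-dhve}-1, I would build a DBDH solver $\mc{B}$ that, on input $D = ((p,\G,\G_T,e), g, g^{c_1}, g^{c_2}, g^{c_3})$ and $T \in \G_T$, samples random $a_1, a_2, a_3 \in \Z_p$, constructs the bilinear product group with basis vectors $\vec{b}_{1,1} = (1,0,a_1),~ \vec{b}_{1,2} = (1,a_2,0),~ \vec{b}_2 = (a_2,-1,a_1 a_2 - a_3),~ \vec{b}_3 = (a_1,a_3,-1)$, and then hands the resulting tuple together with $T$ to $\mc{A}$, outputting whatever $\mc{A}$ outputs.

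The key step is to check that every element of the Assumption \ref*{sec:sw-dhve}-1 challenge can actually be produced from the DBDH tuple. Because $\mc{B}$ knows $a_1,a_2,a_3$ in the clear, it can form the four basis vectors directly: $g^{\vec{b}_{1,1}} = (g,1,g^{a_1})$, $g^{\vec{b}_{1,2}} = (g,g^{a_2},1)$, $g^{\vec{b}_2} = (g^{a_2},g^{-1},g^{a_1 a_2 - a_3})$, $g^{\vec{b}_3} = (g^{a_1},g^{a_3},g^{-1})$. For the remaining components it raises each coordinate to the relevant $c_i$, e.g.\ $(g^{\vec{b}_{1,1}})^{c_1} = (g^{c_1}, 1, (g^{c_1})^{a_1})$, and similarly for $(g^{\vec{b}_{1,1}})^{c_2}$, $(g^{\vec{b}_{1,2}})^{c_1}$, $(g^{\vec{b}_{1,2}})^{c_2}$, and $(g^{\vec{b}_{1,1}})^{c_3}$. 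All of these are computable from $g, g^{c_1}, g^{c_2}, g^{c_3}$ using only the known scalars.

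Since $T$ is forwarded unchanged, $T = e(g,g)^{c_1 c_2 c_3}$ gives precisely the $T_0$ distribution of Assumption \ref*{sec:sw-dhve}-1 and $T = e(g,g)^d$ gives the $T_1$ distribution. The induced distribution on the simulated tuple matches the assumption exactly because $a_1,a_2,a_3$ are uniform and independent of $c_1,c_2,c_3,d$. Hence $\mc{B}$ inherits the full advantage of $\mc{A}$, proving the reduction. There is no real obstacle here beyond the bookkeeping of the coordinate-wise computations; the only thing worth double-checking is that no entry of the challenge requires a pairing involving the unknown product $c_1 c_2$ at the exponent level, which inspection confirms is not the case.
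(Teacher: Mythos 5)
Your reduction is exactly the one the paper uses: sample $a_1,a_2,a_3$ yourself, embed the DBDH instance $g,g^{c_1},g^{c_2},g^{c_3}$ coordinate-wise into the basis-vector encodings, forward $T$ unchanged, and inherit $\mc{A}$'s advantage since the simulated tuple is distributed identically to the Assumption \ref*{sec:sw-dhve}-1 challenge. The only difference is cosmetic --- you correctly give all three coordinates of $(g^{\vec{b}_{1,1}})^{c_3}$, whereas the paper's displayed formula drops the third coordinate $(g^{c_3})^{a_1}$, an apparent typo.
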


\begin{proof}
Suppose there exists an adversary $\mc{A}$ that breaks the Assumption
\ref*{sec:sw-dhve}-1 with a non-negligible advantage. An algorithm $\mc{B}$
that solves the DBDH assumption using $\mc{A}$ is given: a challenge tuple $D
= ((p, \G, \G_T, e), g, g^{c_1}, g^{c_2}, g^{c_3})$ and $T$ where $T = T_0 =
e(g,g)^{c_1 c_2 c_3}$ or $T = T_1 = e(g,g)^d$. $\mc{B}$ first chooses random
values $a_1, a_2, a_3 \in \Z_p$ and sets
    \begin{align*}
    &   g^{\vec{b}_{1,1}} = (g, 1, g^{a_1}),~
        g^{\vec{b}_{1,2}} = (g, g^{a_2}, 1),~
        g^{\vec{b}_2} = (g^{a_2}, g^{-1}, g^{a_1 a_2 - a_3}),~
        g^{\vec{b}_3} = (g^{a_1}, g^{a_3}, g^{-1}),~ \\
    &   (g^{\vec{b}_{1,1}})^{c_1} = (g^{c_1}, 1, (g^{c_1})^{a_1}),~
        (g^{\vec{b}_{1,1}})^{c_2} = (g^{c_2}, 1, (g^{c_2})^{a_1}),~
        (g^{\vec{b}_{1,1}})^{c_3} = (g^{c_3}, 1),~ \\
    &   (g^{\vec{b}_{1,2}})^{c_1} = (g^{c_1}, (g^{c_1})^{a_2}, 1),~
        (g^{\vec{b}_{1,2}})^{c_2} = (g^{c_2}, (g^{c_2})^{a_2}, 1).
    \end{align*}
Next, it gives the tuple
    $D' = ((p,\G, \G_T, e),
    g^{\vec{b}_{1,1}}, g^{\vec{b}_{1,2}}, g^{\vec{b}_2},
    (g^{\vec{b}_{1,1}})^{c_1}, (g^{\vec{b}_{1,1}})^{c_2},
    (g^{\vec{b}_{1,2}})^{c_1}, (g^{\vec{b}_{1,2}})^{c_2},
    (g^{\vec{b}_{1,1}})^{c_3})$ and $T$
to $\mc{A}$. Then $\mc{A}$ outputs a guess $\gamma'$. $\mc{B}$ also outputs
$\gamma'$. If the advantage of $\mc{A}$ is $\epsilon$, then the advantage of
$\mc{B}$ is greater than $\epsilon$ since the distribution of the challenge
tuple to $\mc{A}$ is equal to the Assumption \ref*{sec:sw-dhve}-1.
\end{proof}

\begin{lemma} \label{lem:sw-dhve-a2}
The Assumption \ref*{sec:sw-dhve}-2 holds for all adversaries.
\end{lemma}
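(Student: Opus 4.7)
The plan is to follow the template of Lemma \ref{lem:bw-hve-a2} and show that $T_0$ and $T_1$ are in fact the same group element of $\G_T$ for every choice of $c_1, c_2, c_3, c_4$, so the statement becomes information-theoretic rather than computational. First I would record the three orthogonality relations among the basis vectors that the correctness proof of this dHVE scheme already invokes, namely $e(g^{\vec{b}_{1,1}}, g^{\vec{b}_3}) = 1$, $e(g^{\vec{b}_{1,2}}, g^{\vec{b}_2}) = 1$, and $e(g^{\vec{b}_2}, g^{\vec{b}_3}) = 1$. Each of these reduces to a one-line inner-product computation: $\vec{b}_{1,1} \cdot \vec{b}_3 = a_1 - a_1 = 0$, $\vec{b}_{1,2} \cdot \vec{b}_2 = a_2 - a_2 = 0$, and $\vec{b}_2 \cdot \vec{b}_3 = (a_1 a_2 - a_3) - (a_1 a_2 - a_3) = 0$.

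Second, I would expand $T_0$ by bilinearity into its four cross terms
\begin{align*}
T_0 = e(g^{\vec{b}_{1,1}}, g^{\vec{b}_{1,2}})^{c_1 c_2} \cdot
      e(g^{\vec{b}_{1,1}}, g^{\vec{b}_3})^{c_1 c_4} \cdot
      e(g^{\vec{b}_2}, g^{\vec{b}_{1,2}})^{c_3 c_2} \cdot
      e(g^{\vec{b}_2}, g^{\vec{b}_3})^{c_3 c_4},
\end{align*}
and apply the three orthogonality relations above to kill the last three factors. The remaining factor is exactly $e((g^{\vec{b}_{1,1}})^{c_1}, (g^{\vec{b}_{1,2}})^{c_2}) = T_1$, so $T_0 = T_1$ as elements of $\G_T$ for every choice of $c_1, c_2, c_3, c_4 \in \Z_p$. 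Consequently no adversary, even a computationally unbounded one, has any distinguishing advantage, so the assumption holds unconditionally.

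I do not anticipate any substantive obstacle. The basis vectors $\vec{b}_2$ and $\vec{b}_3$ are designed precisely so that the blinding factor $(g^{\vec{b}_2})^{c_3}$ on the left pairs to $1$ against $(g^{\vec{b}_{1,2}})^{c_2}$ on the right, the blinding factor $(g^{\vec{b}_3})^{c_4}$ on the right pairs to $1$ against $(g^{\vec{b}_{1,1}})^{c_1}$ on the left, and the two blinding factors pair to $1$ against each other. The only cross pairing that would not be trivial, namely $e(g^{\vec{b}_{1,1}}, g^{\vec{b}_2})$, never enters the expansion of $T_0$ because $(g^{\vec{b}_{1,1}})^{c_1}$ and $(g^{\vec{b}_2})^{c_3}$ both sit on the left input of the pairing. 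Thus the only care required is to enumerate all four cross terms in the bilinear expansion and match them against the three orthogonality identities.
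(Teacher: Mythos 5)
Your proof is correct and follows essentially the same route as the paper: the paper's own argument is exactly the one-line observation that the three orthogonality relations $e(g^{\vec{b}_{1,1}}, g^{\vec{b}_3}) = 1$, $e(g^{\vec{b}_2}, g^{\vec{b}_{1,2}}) = 1$, $e(g^{\vec{b}_2}, g^{\vec{b}_3}) = 1$ force $T_0 = T_1$, and you simply make explicit the inner-product verifications and the four-term bilinear expansion. (One small aside: $e(g^{\vec{b}_{1,2}}, g^{\vec{b}_3})$ is a second non-trivial cross pairing that also never enters the expansion, but this does not affect the argument.)
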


\begin{proof}
The equation $e((g^{\vec{b}_{1,1}})^{c_1} (g^{\vec{b}_2})^{c_3},
(g^{\vec{b}_{1,2}})^{c_2} (g^{\vec{b}_3})^{c_4}) =
e((g^{\vec{b}_{1,1}})^{c_1}, (g^{\vec{b}_{1,2}})^{c_2})$ holds by the
orthogonality of basis vectors such that $e(g^{\vec{b}_{1,1}}, g^{\vec{b}_3})
= 1, e(g^{\vec{b}_2}, g^{\vec{b}_{1,2}}) = 1, e(g^{\vec{b}_2}, g^{\vec{b}_3})
= 1$. Therefore, any adversary can not break the Assumption
\ref*{sec:sw-dhve}-2.
\end{proof}

\begin{lemma} \label{lem:sw-dhve-a3}
If the P3DH assumption holds, then the Assumption \ref*{sec:sw-dhve}-3 also
holds.
\end{lemma}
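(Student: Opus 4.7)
The plan is a direct three-dimensional extension of the argument used in Lemma \ref{lem:bw-hve-a3}. Given an adversary $\mc{A}$ against Assumption \ref*{sec:sw-dhve}-3, I would build a reduction $\mc{B}$ that, on input a P3DH challenge $((g,f), (g^a, f^a), (g^b, f^b), (g^{ab} f^{z_1}, g^{z_1}), (g^{abc} f^{z_2}, g^{z_2}))$ with target $T$ equal to either $(g^c f^{z_3}, g^{z_3})$ or $(g^d f^{z_3}, g^{z_3})$, implicitly sets $a_2 := \log_g f$ and identifies $(a,b,c)$ with $(c_1,c_2,c_3)$. Since $\mc{B}$ samples the remaining scalars $a_1, a_3 \in \Z_p$ itself, it can explicitly write down the four basis-vector generators $g^{\vec{b}_{1,1}} = (g, 1, g^{a_1})$, $g^{\vec{b}_{1,2}} = (g, f, 1)$, $g^{\vec{b}_2} = (f, g^{-1}, f^{a_1} g^{-a_3})$, $g^{\vec{b}_3} = (g^{a_1}, g^{a_3}, g^{-1})$, and the ``clean'' challenge elements $(g^{\vec{b}_{1,2}})^{c_1}, (g^{\vec{b}_{1,2}})^{c_2}$ are obtained by appending a trailing $1$ to the P3DH pairs $(g^a, f^a)$ and $(g^b, f^b)$.

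The technical step, and the main place where care is needed, is reconstructing the three-component elements $(g^{\vec{b}_{1,1}})^{c_1 c_2}(g^{\vec{b}_2})^{z_1}$ and $(g^{\vec{b}_{1,1}})^{c_1 c_2 c_3}(g^{\vec{b}_2})^{z_2}$, and rewriting the target in the form $(g^{\vec{b}_{1,1}})^{\tau}(g^{\vec{b}_2})^{z_3}$ with $\tau \in \{c_3, d\}$. In each of these the first coordinate is exactly one of the pair elements that appear in the P3DH tuple, the second coordinate is its inverted partner, and the third coordinate has the shape $g^{a_1(\tau + a_2 \zeta) - a_3 \zeta}$ for the relevant exponents $\tau, \zeta$. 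The key algebraic observation is that $g^{a_2 \zeta} = f^{\zeta}$, so $g^{a_1(\tau + a_2 \zeta)}$ equals $(g^{\tau} f^{\zeta})^{a_1}$, namely the first coordinate raised to $a_1$, while $g^{-a_3 \zeta}$ equals the second coordinate raised to $-a_3$. Both exponents are known to $\mc{B}$, so the third coordinate is computable without knowing $a_2$ or the discrete logarithms of the challenge.

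After assembling the full tuple in this way, $\mc{B}$ forwards everything to $\mc{A}$ and echoes $\mc{A}$'s guess. A short distribution check then confirms that $z_1, z_2, z_3$ and $c_1, c_2, c_3, d$ are independent and uniform in the simulated instance for exactly the same reason they are in the P3DH challenge, so $\mc{B}$ inherits $\mc{A}$'s advantage essentially verbatim. The only real obstacle is the bookkeeping on the third coordinate; once one notices that the otherwise uncomputable $f^{a_1 \zeta}$ cross-terms are absorbed by the factor $(g^{\tau} f^{\zeta})^{a_1}$ already present in the P3DH tuple, every remaining step mirrors the two-dimensional reduction of Lemma \ref{lem:bw-hve-a3} and is routine.
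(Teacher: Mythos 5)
Your reduction is correct and is essentially identical to the paper's own proof: the same implicit embedding $a_2=\log_g f$, the same basis generators $(g,1,g^{a_1})$, $(g,f,1)$, $(f,g^{-1},f^{a_1}g^{-a_3})$, $(g^{a_1},g^{a_3},g^{-1})$, and the same observation that the third coordinate $(g^{\tau}f^{\zeta})^{a_1}(g^{\zeta})^{-a_3}$ is computable from the P3DH pair using only the self-chosen $a_1,a_3$. The only nit is a sign-bookkeeping ambiguity in whether ``the second coordinate'' raised to $-a_3$ refers to $g^{\zeta}$ (correct) or to its inverse $g^{-\zeta}$, which does not affect the substance.
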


\begin{proof}
Suppose there exists an adversary $\mc{A}$ that breaks the Assumption
\ref*{sec:sw-dhve}-3 with a non-negligible advantage. An algorithm $\mc{B}$
that solves the P3DH assumption using $\mc{A}$ is given: a challenge tuple
    $D = ((p, \G, \G_T, e), (g,f),
    (g^{c_1}, f^{c_1}), (g^{c_2}, f^{c_2}),
    (g^{c_1 c_2} f^{z_1}, g^{z_1}), (g^{c_1 c_2 c_3} f^{z_2}, g^{z_2}))$
and $T = T_{\gamma} = (T_{\gamma,1}, T_{\gamma,2})$ where $T = T_0 = (g^{c_3}
f^{z_3}, g^{z_3})$ or $T = T_1 = (g^d f^{z_3}, g^{z_3})$. $\mc{B}$ first
chooses random values $a_1, a_3 \in \Z_p$ and sets
    \begin{align*}
    &   g^{\vec{b}_{1,1}} = (g, 1, g^{a_1}),~
        g^{\vec{b}_{1,2}} = (g, f, 1),~
        g^{\vec{b}_2} = (f, g^{-1}, f^{a_1} g^{-a_3}),~
        g^{\vec{b}_3} = (g^{a_1}, g^{a_3}, g^{-1}),~ \\
    &   (g^{\vec{b}_{1,2}})^{c_1} = (g^{c_1}, f^{c_1}, 1),~
        (g^{\vec{b}_{1,2}})^{c_2} = (g^{c_2}, f^{c_2}, 1),~ \\
    &   (g^{\vec{b}_{1,1}})^{c_1 c_2} (g^{\vec{b}_2})^{z_1}
        = (g^{c_1 c_2} f^{z_1}, (g^{z_1})^{-1}, (g^{c_1 c_2} f^{z_1})^{a_1}
        (g^{z_1})^{-a_3}),~ \\
    &   (g^{\vec{b}_{1,1}})^{c_1 c_2 c_3} (g^{\vec{b}_2})^{z_2}
        = (g^{c_1 c_2 c_3} f^{z_2}, (g^{z_2})^{-1},
        (g^{c_1 c_2 c_3} f^{z_2})^{a_1} (g^{z_2})^{-a_3}),~ \\
    &   T' = (T_{\gamma,1}, T_{\gamma,2}, (T_{\gamma,1})^{a_1}
        (T_{\gamma,2})^{-a_3}).
    \end{align*}
Intuitively, it sets $a_2 = \log f$. Next, it gives the tuple
    $D' = ((p,\G, \G_T, e),
    g^{\vec{b}_{1,1}}, g^{\vec{b}_{1,2}}, g^{\vec{b}_2}, g^{\vec{b}_3},
    (g^{\vec{b}_{1,2}})^{c_1}, \lb (g^{\vec{b}_{1,2}})^{c_2}, \lb
    (g^{\vec{b}_{1,1}})^{c_1 c_2} (g^{\vec{b}_2})^{z_1},
    (g^{\vec{b}_{1,1}})^{c_1 c_2 c_3} (g^{\vec{b}_2})^{z_2} )$ and $T'$
to $\mc{A}$. Then $\mc{A}$ outputs a guess $\gamma'$. $\mc{B}$ also outputs
$\gamma'$. If the advantage of $\mc{A}$ is $\epsilon$, then the advantage of
$\mc{B}$ is greater than $\epsilon$ since the distribution of the challenge
tuple to $\mc{A}$ is equal to the Assumption \ref*{sec:sw-dhve}-3.
\end{proof}

\begin{lemma} \label{lem:sw-dhve-a4}
If the P3DH assumption holds, then the Assumption \ref*{sec:sw-dhve}-4 also
holds.
\end{lemma}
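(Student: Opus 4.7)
The plan is to mirror the reduction of Lemma~\ref{lem:sw-dhve-a3}. Given a P3DH instance $((g,f),(g^{c_1},f^{c_1}),(g^{c_2},f^{c_2}),(g^{c_1c_2}f^{z_1},g^{z_1}),(g^{c_1c_2c_3}f^{z_2},g^{z_2}),T)$ with $T=(T_{\gamma,1},T_{\gamma,2})$, I would build an Assumption \ref*{sec:sw-dhve}-4 challenge and forward the adversary's guess. Structurally, Assumption \ref*{sec:sw-dhve}-4 is obtained from Assumption \ref*{sec:sw-dhve}-3 by interchanging the roles of $(\vec{b}_{1,1},\vec{b}_2)$ with $(\vec{b}_{1,2},\vec{b}_3)$. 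In Lemma~\ref{lem:sw-dhve-a3} the simulator implicitly sets $a_2=\log_g f$, so that $f$ sits in the $a_2$-slot of $\vec{b}_2$ and the P3DH element $(g^{c_1c_2}f^{z_1},g^{z_1},\cdot)$ lines up with $(g^{\vec{b}_{1,1}})^{c_1c_2}(g^{\vec{b}_2})^{z_1}$. For the current lemma the dual choice is $a_1=\log_g f$: now $f$ sits exactly in the $a_1$-slot of $\vec{b}_3$, where the blinding vector acts in Assumption \ref*{sec:sw-dhve}-4.

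Concretely, $\mc{B}$ picks $a_2,a_3\in\Z_p$ uniformly and, with $a_1$ implicitly fixed by $f=g^{a_1}$, publishes $g^{\vec{b}_{1,1}}=(g,1,f)$, $g^{\vec{b}_{1,2}}=(g,g^{a_2},1)$, $g^{\vec{b}_2}=(g^{a_2},g^{-1},f^{a_2}g^{-a_3})$, and $g^{\vec{b}_3}=(f,g^{a_3},g^{-1})$; every cross term such as $g^{a_1a_2-a_3}=f^{a_2}g^{-a_3}$ is computable from the P3DH handles. The two scalar-in-exponent terms $(g^{\vec{b}_{1,1}})^{c_1}=(g^{c_1},1,f^{c_1})$ and $(g^{\vec{b}_{1,1}})^{c_2}=(g^{c_2},1,f^{c_2})$ come directly from the P3DH tuple. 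For $(g^{\vec{b}_{1,2}})^{c_1c_2}(g^{\vec{b}_3})^{z_1}$ the first coordinate is exactly $g^{c_1c_2}f^{z_1}$, the third is $(g^{z_1})^{-1}$, and the second is recovered as $(g^{c_1c_2}f^{z_1})^{a_2}(g^{z_1})^{a_3-a_1a_2}$; the same recipe with $(z_2,c_3)$ gives the next element. Finally, the target is simulated as $T'=\bigl(T_{\gamma,1},\,(T_{\gamma,1})^{a_2}(T_{\gamma,2})^{a_3-a_1a_2},\,(T_{\gamma,2})^{-1}\bigr)$, which equals $(g^{\vec{b}_{1,2}})^{c_3}(g^{\vec{b}_3})^{z_3}$ when $\gamma=0$ and $(g^{\vec{b}_{1,2}})^{d}(g^{\vec{b}_3})^{z_3}$ when $\gamma=1$.

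I expect the only work to be bookkeeping: checking that each coordinate of every vector in Assumption \ref*{sec:sw-dhve}-4 lies in the subalgebra generated by $g,f,g^{c_i},f^{c_i},g^{c_1c_2}f^{z_1},g^{z_1},g^{c_1c_2c_3}f^{z_2},g^{z_2},T_{\gamma,1},T_{\gamma,2}$ together with the known scalars $a_2,a_3$. The cross term $a_1a_2-a_3$ that appears in $\vec{b}_2$, and the analogous $a_1a_2$ correction needed in the middle coordinates of the two composite elements and of $T'$, are the only spots where one might worry; but in each case the relevant quantity is the $a_2$-power of something of the form $g^{\ast}f^{z_i}$ adjusted by a known power of $g^{z_i}$, so nothing is truly hidden. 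Once this is verified, the distribution $\mc{B}$ presents to $\mc{A}$ is identical to a genuine Assumption \ref*{sec:sw-dhve}-4 instance with bit $\gamma$, so $\mc{B}$ inherits $\mc{A}$'s advantage against P3DH.
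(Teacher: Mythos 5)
Your high-level plan is the paper's: swap the roles of $(\vec{b}_{1,1},\vec{b}_2)$ and $(\vec{b}_{1,2},\vec{b}_3)$ relative to Lemma~\ref{lem:sw-dhve-a3} and embed $f$ as $a_1=\log_g f$. But the explicit reduction you write down is not computable, and the place you flagged as ``the only spots where one might worry'' is exactly where it breaks. With your parametrization $\vec{b}_3=(a_1,a_3,-1)$ (i.e.\ $g^{\vec{b}_3}=(f,g^{a_3},g^{-1})$), the middle coordinate of $(g^{\vec{b}_{1,2}})^{c_1c_2}(g^{\vec{b}_3})^{z_1}$ is $g^{a_2c_1c_2+a_3z_1}$. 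Your formula for it, $(g^{c_1c_2}f^{z_1})^{a_2}(g^{z_1})^{a_3-a_1a_2}$, is correct as an identity but uses the exponent $a_3-a_1a_2$, which is \emph{not} known to $\mc{B}$ because $a_1=\log_g f$ is unknown; and no other combination works, since after the $f^{z_1}$ factors cancel the expression equals $(g^{c_1c_2})^{a_2}(g^{z_1})^{a_3}$, and the P3DH instance never provides $g^{c_1c_2}$ (or $f^{z_1}$) in isolation. The same failure occurs for the second composite element and for the middle coordinate of your simulated target $T'$.

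The paper closes this gap with a re-parametrization you are missing: it implicitly sets the assumption's third basis parameter to $a'_3 = a_1a_2 - a_3$ rather than $a_3$, i.e.\ it publishes $g^{\vec{b}_2}=(g^{a_2},g^{-1},g^{a_3})$ and $g^{\vec{b}_3}=(f,\, f^{a_2}g^{-a_3},\, g^{-1})$. Then the middle entry of $\vec{b}_3$ is $a_1a_2-a_3$, so the middle coordinate of $(g^{\vec{b}_{1,2}})^{c_1c_2}(g^{\vec{b}_3})^{z_1}$ becomes $(g^{c_1c_2}f^{z_1})^{a_2}(g^{z_1})^{-a_3}$ --- the $a_2$-power of the packaged P3DH element adjusted by the genuinely known exponent $-a_3$ --- and likewise $T'=(T_{\gamma,1},\,(T_{\gamma,1})^{a_2}(T_{\gamma,2})^{-a_3},\,(T_{\gamma,2})^{-1})$. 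Since $a_3$ is uniform, so is $a'_3$, and the simulated tuple has exactly the right distribution. Your argument becomes correct once you adopt this choice of $g^{\vec{b}_2}$ and $g^{\vec{b}_3}$.
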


\begin{proof}
Suppose there exists an adversary $\mc{A}$ that breaks the Assumption
\ref*{sec:sw-dhve}-4 with a non-negligible advantage. An algorithm $\mc{B}$
that solves the P3DH assumption using $\mc{A}$ is given: a challenge tuple
    $D = ((p, \G, \G_T, e), (g,f),
    (g^{c_1}, f^{c_1}), (g^{c_2}, f^{c_2}),
    (g^{c_1 c_2} f^{z_1}, g^{z_1}), (g^{c_1 c_2 c_3} f^{z_2}, g^{z_2}))$
and $T = T_{\gamma} = (T_{\gamma,1}, T_{\gamma,2})$ where $T_0 = (g^{c_3}
f^{z_3}, g^{z_3})$ or $T_1 = (g^d f^{z_3}, g^{z_3})$. $\mc{B}$ first chooses
random values $a_2, a_3 \in \Z_p$ and sets
    \begin{align*}
    &   g^{\vec{b}_{1,1}} = (g, 1, f),~
        g^{\vec{b}_{1,2}} = (g, g^{a_2}, 1),~
        g^{\vec{b}_2} = (g^{a_2}, g^{-1}, g^{a_3}),~
        g^{\vec{b}_3} = (f, f^{a_2} g^{-a_3}, g^{-1}),~ \\
    &   (g^{\vec{b}_{1,1}})^{c_1} = (g^{c_1}, 1, f^{c_1}),~
        (g^{\vec{b}_{1,1}})^{c_2} = (g^{c_2}, 1, f^{c_2}),~ \\
    &   (g^{\vec{b}_{1,2}})^{c_1 c_2} (g^{\vec{b}_3})^{z_1}
        = (g^{c_1 c_2} f^{z_1}, (g^{c_1 c_2} f^{z_1})^{a_2} (g^{z_1})^{-a_3},
          (g^{z_1})^{-1}),~ \\
    &   (g^{\vec{b}_{1,2}})^{c_1 c_2 c_3} (g^{\vec{b}_3})^{z_2}
        = (g^{c_1 c_2 c_3} f^{z_2}, (g^{c_1 c_2 c_3} f^{z_2})^{a_2} (g^{z_2})^{-a_3},
          (g^{z_2})^{-1}),~ \\
    &   T' = (T_{\gamma,1}, (T_{\gamma,1})^{a_2} (T_{\gamma,2})^{-a_3},
        (T_{\gamma,2})^{-1}).
    \end{align*}
Intuitively, it sets $a'_1 = \log f, a'_2 = a_2, a'_3 = a_1 a_2 - a_3$ where
$a'_1, a'_2, a'_3$ are elements of basis vectors for the Assumption
\ref*{sec:sw-dhve}-4. Next, it gives the tuple
    $D' = ((p,\G, \G_T, e),
    g^{\vec{b}_{1,1}}, g^{\vec{b}_{1,2}}, g^{\vec{b}_2}, g^{\vec{b}_3},
    (g^{\vec{b}_{1,1}})^{c_1}, (g^{\vec{b}_{1,1}})^{c_2},
    (g^{\vec{b}_{1,2}})^{c_1 c_2} \cdot (g^{\vec{b}_3})^{z_1},
    (g^{\vec{b}_{1,2}})^{c_1 c_2 c_3} (g^{\vec{b}_3})^{z_2} )$ and $T'$
to $\mc{A}$. Then $\mc{A}$ outputs a guess $\gamma'$. $\mc{B}$ also outputs
$\gamma'$. If the advantage of $\mc{A}$ is $\epsilon$, then the advantage of
$\mc{B}$ is greater than $\epsilon$ since the distribution of the challenge
tuple to $\mc{A}$ is equal to the Assumption \ref*{sec:sw-dhve}-4.
\end{proof}

\section{Conversion 3: LL-HVE} \label{sec:ll-hve}

In this section, we convert the HVE scheme of Lee and Lee \cite{LeeL11} to
prime order bilinear groups and prove its selective security under the DBDH
and P3DH assumptions.

\subsection{Construction}

\begin{description}
\item [\tb{Setup}($1^{\lambda}, \ell$):] It generates the bilinear group
    $\G$ of prime order $p$ of bit size $\Theta(\lambda)$. It chooses
    random values $a_1, a_2, a_3 \in \Z_p$ and sets basis vectors for
    bilinear product groups as $\vec{b}_{1,1} = (1, 0, a_1),~ \vec{b}_{1,2}
    = (1, a_2, 0),~ \vec{b}_2 = (a_2, -1, a_1 a_2 - a_3),~ \vec{b}_3 =
    (a_1, a_3, -1)$. It also sets
    \begin{align*}
    &   \vec{B}_{1,1} = g^{\vec{b}_{1,1}},~
        \vec{B}_{1,2} = g^{\vec{b}_{1,2}},~
        \vec{B}_2 = g^{\vec{b}_2},~
        \vec{B}_3 = g^{\vec{b}_3}.
    \end{align*}
    It selects random exponents $v', w'_1, w'_2, \{ u'_i, h_i
    \}_{i=1}^\ell, \alpha \in \Z_p$, $z_v, z_{w,1}, z_{w,2}, \{ z_{u,i},
    z_{h,i} \}_{i=1}^\ell \in \Z_p$ and outputs a secret key and a public
    key as
    \begin{align*}
    SK = \Big(~
    &   V_k = \vec{B}_{1,2}^{v'},
        W_{k,1} = \vec{B}_{1,2}^{w'_1},
        W_{k,2} = \vec{B}_{1,2}^{w'_2},~
        \big\{
        U_{k,i} = \vec{B}_{1,2}^{u'_i},
        H_{k,i} = \vec{B}_{1,2}^{h'_i}
        \big\}_{i=1}^\ell,~
        \vec{B}_{1,2}^{\alpha}
        \Big), \\
    PK = \Big(~
    &   \vec{B}_{1,1},~ \vec{B}_{1,2},~ \vec{B}_2,~ \vec{B}_3,~
        \vec{V}_c = \vec{B}_{1,1}^{v'} \vec{B}_2^{z_v},~
        \vec{W}_{c,1} = \vec{B}_{1,1}^{w'_1} \vec{B}_2^{z_{w,1}},~
        \vec{W}_{c,2} = \vec{B}_{1,1}^{w'_2} \vec{B}_2^{z_{w,2}},~ \\
    &   \big\{
        \vec{U}_{c,i} = {\vec{B}_{1,1}^{u'_i}} \vec{B}_2^{z_{u,i}},~
        \vec{H}_{c,i} = {\vec{B}_{1,1}^{h'_i}} \vec{B}_2^{z_{h,i}}
        \big\}_{i=1}^\ell,~
        \Omega = e(\vec{B}_{1,1}^{v'}, \vec{B}_{1,2})^{\alpha} ~\Big).
    \end{align*}

\item [\tb{GenToken}($\vec{\sigma}, SK, PK$):] It takes as input a vector
    $\vec{\sigma} = (\sigma_1, \ldots, \sigma_\ell) \in \Sigma_*^\ell$ and
    the secret key $SK$. Let $S$ be the set of indexes that are not
    wild-card fields in the vector $\vec{\sigma}$. It selects random
    exponents $r_1, r_2, r_3 \in \Z_p$ and random blinding values $y_1,
    y_2, y_3, y_4 \in \Z_p$. Next it outputs a token as
    \begin{align*}
    TK_{\vec{\sigma}} = \Big(~
    &   \vec{K}_1 = \vec{B}_{1,2}^{\alpha} \vec{W}_{k,1}^{r_1} \vec{W}_{k,2}^{r_2}
            \prod_{i \in S} (\vec{U}_{k,i}^{\sigma_i} \vec{H}_{k,i})^{r_3}
            \vec{B}_3^{y_1},~
        \vec{K}_2 = \vec{V}_k^{-r_1} \vec{B}_3^{y_2},~
        \vec{K}_3 = \vec{V}_k^{-r_2} \vec{B}_3^{y_3},~
        \vec{K}_4 = \vec{V}_k^{-r_3} \vec{B}_3^{y_4}
    ~\Big).
    \end{align*}

\item [\tb{Encrypt}($\vec{x}, M, PK$):] It takes as input a vector $\vec{x}
    = (x_1, \ldots, x_\ell) \in \Sigma^l$, a message $M \in \mc{M}$, and
    the public key $PK$. It first chooses a random exponent $t \in \Z_p$
    and random blinding values $z_1, z_2, z_3, \{ z_{4,i} \}_{i=1}^\ell \in
    \Z_p$. Then it outputs a ciphertext as
    \begin{align*}
    CT = \Big(~
        C_0 = \Omega^t M,~
        \vec{C}_1 = \vec{V}_c^t \vec{B}_2^{z_1},~
        \vec{C}_2 = \vec{W}_{c,1}^t \vec{B}_2^{z_2},~
        \vec{C}_3 = \vec{W}_{c,2}^t \vec{B}_2^{z_3},~
        \big\{
        \vec{C}_{4,i} = (\vec{U}_{c,i}^{x_i} \vec{H}_{c,i})^t \vec{B}_2^{z_{4,i}}
        \big\}_{i=1}^\ell
    ~\Big).
    \end{align*}

\item [\tb{Query}($CT, TK_{\vec{\sigma}}, PK$):] It takes as input a
    ciphertext $CT$ and a token $TK_{\vec{\sigma}}$ of a vector
    $\vec{\sigma}$. It first computes
    \begin{align*}
    M \leftarrow C_0 \cdot \Big(
        e(\vec{C}_1, \vec{K}_1) \cdot e(\vec{C}_2, \vec{K}_2) \cdot
        e(\vec{C}_3, \vec{K}_3) \cdot
        e(\prod_{i \in S} \vec{C}_{4,i}, \vec{K}_4) \Big)^{-1}.
    \end{align*}
    If $M \notin \mathcal{M}$, it outputs $\perp$ indicating that the
    predicate $f_{\vec{\sigma}}$ is not satisfied. Otherwise, it outputs
    $M$ indicating that the predicate $f_{\vec{\sigma}}$ is satisfied.
\end{description}

\subsection{Correctness}

If $f_{\vec{\sigma}}(\vec{x}) = 1$, then the following calculation shows that
$\tb{Query}(CT, TK_{\vec{\sigma}}, PK) = M$ by the orthogonality of basis
vectors such that $e(g^{\vec{b}_{1,1}}, g^{\vec{b}_3}) = 1,
e(g^{\vec{b}_{1,2}}, g^{\vec{b}_2}) = 1, e(g^{\vec{b}_2}, g^{\vec{b}_3}) =
1$.
    \begin{align*}
    \lefteqn{ e(\vec{C}_1, \vec{K}_1) \cdot e(\vec{C}_2, \vec{K}_2) \cdot
        e(\vec{C}_3, \vec{K}_3) \cdot
        e(\prod_{i \in S} \vec{C}_{4,i}, \vec{K}_4) } \\
    &=  e(\vec{V}_c^t, \vec{B}_{1,2}^{\alpha} \vec{W}_{k,1}^{r_1} \vec{W}_{k,2}^{r_2}
            \prod_{i \in S} (\vec{U}_{k,i}^{\sigma_i} \vec{H}_{k,i})^{r_3}) \cdot
        e(\vec{W}_{c,1}^t, \vec{V}_k^{-r_1}) \cdot e(\vec{W}_{c,2}^t, \vec{V}_k^{-r_2}) \cdot
        e(\prod_{i \in S} (\vec{U}_{c,i}^{x_i} \vec{H}_{c,i})^t, \vec{V}_k^{-r_3}) \\
    &=  e(\vec{B}_{1,1}^{v' t}, \vec{B}_{1,2}^{\alpha}) \cdot
        e(g^{v'}, \prod_{i \in S} g^{u'_i (\sigma_i - x_i)})^{t r_3}
     =  e(\vec{B}_{1,1}^{v'}, \vec{B}_{1,2})^{\alpha t}.
    \end{align*}
Otherwise, that is $f_{\vec{\sigma}}(\vec{x}) = 0$, the probability of
$\tb{Query}(CT, TK_{\vec{\sigma}}, PK) \neq \perp$ is negligible by limiting
$|\mathcal{M}|$ to less than $|\G_T|^{1/4}$.

\subsection{Security}

\begin{theorem} \label{thm:ll-hve}
The above HVE scheme is selectively secure under the DBDH and P3DH
assumptions.
\end{theorem}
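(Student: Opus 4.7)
The plan is to mirror the structure of Theorem \ref{thm:sw-dhve}: decompose the security proof through a handful of intermediate assumptions tailored to the bilinear product groups generated by the basis vectors $\vec{b}_{1,1}, \vec{b}_{1,2}, \vec{b}_2, \vec{b}_3$, and then reduce each of them to either DBDH or P3DH (or show it holds unconditionally by orthogonality).

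First, I would observe that the LL-HVE scheme uses the same four basis vectors as the SW-dHVE scheme in Section \ref{sec:sw-dhve}, and that the original composite-order proof of \cite{LeeL11} relies on DBDH, BSD, and two flavors of C3DH (one for ciphertext blinding, one for token blinding). Accordingly, I would introduce four intermediate assumptions syntactically identical to Assumptions \ref*{sec:sw-dhve}-1, \ref*{sec:sw-dhve}-2, \ref*{sec:sw-dhve}-3, and \ref*{sec:sw-dhve}-4, reusing the same vector layouts.

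Second, I would prove a lemma analogous to Lemma \ref{lem:sw-dhve}, stating that the LL-HVE scheme is selectively secure under these four intermediate assumptions. The proof is obtained by lifting the composite-order reduction of \cite{LeeL11} to the prime-order setting via Definition \ref{def:vector-ops}; the orthogonality relations $e(g^{\vec{b}_{1,1}}, g^{\vec{b}_3}) = 1$, $e(g^{\vec{b}_{1,2}}, g^{\vec{b}_2}) = 1$, and $e(g^{\vec{b}_2}, g^{\vec{b}_3}) = 1$ make the decryption equation collapse exactly as in composite order, while the subgroups spanned by $\vec{B}_2$ and $\vec{B}_3$ take over the roles of the composite subgroups used for ciphertext and token blinding respectively.

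Third, because the reductions from the four intermediate assumptions to DBDH, BSD, and P3DH depend only on the structure of the basis vectors and not on the specific HVE scheme under analysis, Lemmas \ref{lem:sw-dhve-a1}, \ref{lem:sw-dhve-a2}, \ref{lem:sw-dhve-a3}, and \ref{lem:sw-dhve-a4} apply essentially verbatim: one reduces the DBDH-style assumption to DBDH by padding with fresh $a_1, a_2, a_3$; the BSD-style assumption holds unconditionally by orthogonality; and the two C3DH-style assumptions are reduced to P3DH by implicitly setting $a_2 = \log f$ in one case and $a_1 = \log f$ in the other, so that the $f$-components of the P3DH instance embed into the $\vec{B}_2$ and $\vec{B}_3$ blinding factors.

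The main obstacle is the simulation in the second step: LL-HVE uses a single shared token exponent $r_3$ across all non-wildcard positions (rather than per-index exponents $r_{1,i}, r_{2,i}$ as in the BW-HVE of Section \ref{sec:bw-hve}), so the simulator must still be able to program $\vec{K}_4 = \vec{V}_k^{-r_3} \vec{B}_3^{y_4}$ using only a single C3DH-style challenge even though the hybrid argument walks through positions one at a time. However, this issue is internal to the composite-order proof of \cite{LeeL11} and was already resolved there; since our prime-order simulation merely substitutes basis-vector exponentiation for subgroup exponentiation and leaves the hybrid structure intact, no new simulation difficulties arise, and the theorem follows by combining the four lemmas.
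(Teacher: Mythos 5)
Your overall scaffolding matches the paper's: four intermediate assumptions tied to the basis vectors, a lifting lemma transporting the composite-order proof via Definition \ref{def:vector-ops}, and reductions of each intermediate assumption to DBDH, to orthogonality, or to P3DH. The first three intermediate assumptions are indeed identical to Assumptions \ref*{sec:sw-dhve}-1, \ref*{sec:sw-dhve}-2, and \ref*{sec:sw-dhve}-3, and the paper reuses Lemmas \ref{lem:sw-dhve-a1}--\ref{lem:sw-dhve-a3} exactly as you propose.

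The gap is in your fourth assumption. The composite-order proof of \cite{LeeL11} does not use ``two flavors of C3DH''; it uses C3DH on the ciphertext side and the weaker-looking C2DH assumption on the token side, precisely because of the shared exponent $r_3$ you mention in your last paragraph. In composite order, C3DH implies C2DH, so the distinction is invisible there; but in prime order the ciphertext-side and token-side assumptions live in different subgroup layouts ($\vec{b}_{1,1},\vec{b}_2$ versus $\vec{b}_{1,2},\vec{b}_3$), so that implication is lost and the token-side assumption must be stated as the C2DH analogue directly. Concretely, the paper's Assumption \ref*{sec:ll-hve}-4 gives only $(g^{\vec{b}_{1,2}})^{c_1}(g^{\vec{b}_3})^{z_1}$ and $(g^{\vec{b}_{1,2}})^{c_2}(g^{\vec{b}_3})^{z_2}$ (no unblinded $(g^{\vec{b}_{1,1}})^{c_i}$ terms) and asks to distinguish $(g^{\vec{b}_{1,2}})^{c_1 c_2}(g^{\vec{b}_3})^{z_3}$ from random, whereas Assumption \ref*{sec:sw-dhve}-4, which you propose to reuse, is a genuinely different statement. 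Consequently your lifting lemma cannot cite \cite{LeeL11} ``directly'' as the paper's Lemma \ref{lem:ll-hve} does, since your hypothesis is not the prime-order counterpart of the assumption that proof actually invokes; and Lemma \ref{lem:sw-dhve-a4} does not apply verbatim. The paper instead gives a separate reduction (Lemma \ref{lem:ll-hve-a4}) from P3DH to the C2DH-style assumption, which relies on the remark after the P3DH definition that the challenge-tuple positions may be permuted, embedding $c'_1 = c_1 c_2$ and $c'_2 = c_3$ so that the target $g^{c_1 c_2 c_3} f^{z_3}$ plays the role of $(g^{\vec{b}_{1,2}})^{c'_1 c'_2}(g^{\vec{b}_3})^{z_3}$. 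You would need to either adopt that assumption and reduction, or separately prove that your SW-style token assumption suffices for the LL hybrid, which your proposal does not do.
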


\begin{proof}
The proof of this theorem is easily obtained from the following five Lemmas
\ref{lem:ll-hve}, \ref{lem:ll-hve-a1}, \ref{lem:ll-hve-a2},
\ref{lem:ll-hve-a3} and \ref{lem:ll-hve-a4}. Before presenting the five
lemmas, we first introduce the following four assumptions. The HVE scheme of
Lee and Lee constructed in bilinear groups of composite order $N = p_1 p_2
p_3$, and its security was proven under the DBDH, BSD, and C3DH assumptions
\cite{ShiW08}. In composite order bilinear groups, the C3DH assumption imply
the C2DH assumption that was introduced in \cite{LeeL11}. However, this
implication is not valid in prime order bilinear groups since the basis
vectors for ciphertexts and tokens are different. Thus the C3DH assumption
for ciphertexts and the C2DH assumption for tokens should be treated as
differently. These assumptions in composite order bilinear groups are
converted to the following Assumptions \ref*{sec:ll-hve}-1,
\ref*{sec:ll-hve}-2, \ref*{sec:ll-hve}-3, and \ref*{sec:ll-hve}-4 using our
conversion method.
\end{proof}

\vs \noindent \textbf{Assumption \ref*{sec:ll-hve}-1} Let $((p, \G, \G_T, e),
g^{\vec{b}_{1,1}}, g^{\vec{b}_{1,2}}, g^{\vec{b}_2}, g^{\vec{b}_3})$ be the
bilinear product group of basis vectors $\vec{b}_{1,1} = (1,0,a_1),
\vec{b}_{1,2} = (1,a_2,0), \vec{b}_2 = (a_2,-1,a_1 a_2 - a_3), \vec{b}_3 =
(a_1, a_3, -1)$. The Assumption \ref*{sec:ll-hve}-1 is stated as follows:
given a challenge tuple
    $$D = \big( (p, \G, \G_T, e),~
    g^{\vec{b}_{1,1}}, g^{\vec{b}_{1,2}}, g^{\vec{b}_2}, g^{\vec{b}_3},
    (g^{\vec{b}_{1,1}})^{c_1}, (g^{\vec{b}_{1,1}})^{c_2},
    (g^{\vec{b}_{1,2}})^{c_1}, (g^{\vec{b}_{1,2}})^{c_2},
    (g^{\vec{b}_{1,1}})^{c_3} \big) \mbox{ and } T,$$
decides whether $T = T_0 = e(g, g)^{c_1 c_2 c_3}$ or $T = T_1 = e(g,g)^d$
with random choices of $c_1, c_2, c_3, d \in \Z_p$.

\vs \noindent \textbf{Assumption \ref*{sec:ll-hve}-2} Let $((p, \G, \G_T, e),
g^{\vec{b}_{1,1}}, g^{\vec{b}_{1,2}}, g^{\vec{b}_2}, g^{\vec{b}_3})$ be the
bilinear product group of basis vectors $\vec{b}_{1,1} = (1,0,a_1),
\vec{b}_{1,2} = (1,a_2,0), \vec{b}_2 = (a_2,-1,a_1 a_2 - a_3), \vec{b}_3 =
(a_1, a_3, -1)$. The Assumption \ref*{sec:ll-hve}-2 is stated as follows:
given a challenge tuple
    $$D = \big( (p, \G, \G_T, e),~
    g^{\vec{b}_{1,1}}, g^{\vec{b}_{1,2}}, g^{\vec{b}_2}, g^{\vec{b}_3}
    \big) \mbox{ and } T,$$
decides whether $T = T_0 = e((g^{\vec{b}_{1,1}})^{c_1} (g^{\vec{b}_2})^{c_3},
(g^{\vec{b}_{1,2}})^{c_2} (g^{\vec{b}_3})^{c_4})$ or $T = T_1 =
e((g^{\vec{b}_{1,1}})^{c_1}, (g^{\vec{b}_{1,2}})^{c_2})$ with random choices
of $c_1, c_2, c_3, c_4 \in \Z_p$.

\vs \noindent \textbf{Assumption \ref*{sec:ll-hve}-3} Let $((p, \G, \G_T, e),
g^{\vec{b}_{1,1}}, g^{\vec{b}_{1,2}}, g^{\vec{b}_2}, g^{\vec{b}_3})$ be the
bilinear product group of basis vectors $\vec{b}_{1,1} = (1,0,a_1),
\vec{b}_{1,2} = (1,a_2,0), \vec{b}_2 = (a_2,-1,a_1 a_2 - a_3), \vec{b}_3 =
(a_1, a_3, -1)$. The Assumption \ref*{sec:ll-hve}-3 is stated as follows:
given a challenge tuple
    \begin{eqnarray*}
    D = \big( (p, \G, \G_T, e),~
    g^{\vec{b}_{1,1}}, g^{\vec{b}_{1,2}}, g^{\vec{b}_2}, g^{\vec{b}_3},
    (g^{\vec{b}_{1,2}})^{c_1}, (g^{\vec{b}_{1,2}})^{c_2},
    (g^{\vec{b}_{1,1}})^{c_1 c_2} (g^{\vec{b}_2})^{z_1},
    (g^{\vec{b}_{1,1}})^{c_1 c_2 c_3} (g^{\vec{b}_2})^{z_2}
    \big) \mbox{ and } T,
    \end{eqnarray*}
decides whether $T = T_0 = (g^{\vec{b}_{1,1}})^{c_3} (g^{\vec{b}_2})^{z_3}$
or $T = T_1 = (g^{\vec{b}_{1,1}})^d (g^{\vec{b}_2})^{z_3}$ with random
choices of $c_1, c_2, c_3, d \in \Z_p$ and $z_1, z_2, z_3 \in \Z_p$.

\vs \noindent \textbf{Assumption \ref*{sec:ll-hve}-4} Let $((p, \G, \G_T, e),
g^{\vec{b}_{1,1}}, g^{\vec{b}_{1,2}}, g^{\vec{b}_2}, g^{\vec{b}_3})$ be the
bilinear product group of basis vectors $\vec{b}_{1,1} = (1,0,a_1),
\vec{b}_{1,2} = (1,a_2,0), \vec{b}_2 = (a_2,-1,a_1 a_2 - a_3), \vec{b}_3 =
(a_1, a_3, -1)$. The Assumption \ref*{sec:ll-hve}-4 is stated as follows:
given a challenge tuple
    \begin{eqnarray*}
    D = \big( (p, \G, \G_T, e),~
    g^{\vec{b}_{1,1}}, g^{\vec{b}_{1,2}}, g^{\vec{b}_2}, g^{\vec{b}_3},
    (g^{\vec{b}_{1,2}})^{c_1} (g^{\vec{b}_3})^{z_1},
    (g^{\vec{b}_{1,2}})^{c_2} (g^{\vec{b}_3})^{z_2}
    \big) \mbox{ and } T,
    \end{eqnarray*}
decides whether $T = T_0 = (g^{\vec{b}_{1,2}})^{c_1 c_2}
(g^{\vec{b}_3})^{z_3}$ or $T = T_1 = (g^{\vec{b}_{1,2}})^d
(g^{\vec{b}_3})^{z_3}$ with random choices of $c_1, c_2, d \in \Z_p$ and
$z_1, z_2, z_3 \in \Z_p$.

\begin{lemma} \label{lem:ll-hve}
The above HVE scheme is selectively secure under the Assumptions
\ref*{sec:ll-hve}-1, \ref*{sec:ll-hve}-2, \ref*{sec:ll-hve}-3, and
\ref*{sec:ll-hve}-4.
\end{lemma}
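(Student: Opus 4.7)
The plan is to reduce selective security of the converted LL-HVE scheme to the original composite-order security argument of Lee and Lee \cite{LeeL11}. The key observation, following the pattern already used in Lemmas \ref{lem:bw-hve} and \ref{lem:sw-dhve}, is that Assumptions \ref*{sec:ll-hve}-1, \ref*{sec:ll-hve}-2, \ref*{sec:ll-hve}-3, \ref*{sec:ll-hve}-4 in prime order bilinear product groups correspond respectively to the DBDH, BSD, C3DH (for ciphertexts), and C2DH (for tokens) assumptions in composite order bilinear groups. Moreover, the basis vectors $\vec{b}_{1,1}, \vec{b}_{1,2}, \vec{b}_2, \vec{b}_3$ were chosen so that the orthogonality relations $e(g^{\vec{b}_{1,1}}, g^{\vec{b}_3}) = 1$, $e(g^{\vec{b}_{1,2}}, g^{\vec{b}_2}) = 1$, $e(g^{\vec{b}_2}, g^{\vec{b}_3}) = 1$ hold, which mirror exactly the orthogonality among $\G_{p_1}, \G_{p_2}, \G_{p_3}$ used in \cite{LeeL11}.

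With this dictionary in hand, I would traverse the hybrid sequence of \cite{LeeL11} step by step, replacing each composite-order distinguishing step with its prime-order counterpart. The hybrids that inject $\G_{p_2}$-style randomness into ciphertext components $\vec{C}_1, \vec{C}_2, \vec{C}_3, \vec{C}_{4,i}$ become transitions under Assumption \ref*{sec:ll-hve}-3; the hybrids that inject $\G_{p_3}$-style randomness into the token components $\vec{K}_1, \vec{K}_2, \vec{K}_3, \vec{K}_4$ become transitions under Assumption \ref*{sec:ll-hve}-4; the information-theoretic step that shows properly blinded ciphertexts and tokens pair to $1$ on the $\vec{b}_2/\vec{b}_3$ parts is subsumed by Assumption \ref*{sec:ll-hve}-2; and the final message-hiding step that replaces $\Omega^t$ by a uniform element of $\G_T$ is handled by Assumption \ref*{sec:ll-hve}-1. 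The query restriction $f_{\vec{\sigma}_i}(\vec{x}_0) = f_{\vec{\sigma}_i}(\vec{x}_1)$ on every token query is used exactly as in \cite{LeeL11} to guarantee that the exponent $r_3$ (shared across the product $\prod_{i \in S} (\vec{U}_{k,i}^{\sigma_i}\vec{H}_{k,i})^{r_3}$) in the $\vec{K}_1$ component blinds away any difference between $\vec{x}_0$ and $\vec{x}_1$.

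The main obstacle will be verifying that the two-sided splitting of the original composite C3DH assumption into a ciphertext-side assumption (Assumption \ref*{sec:ll-hve}-3) and a token-side assumption (Assumption \ref*{sec:ll-hve}-4) really does suffice, given that the LL-HVE scheme uses a single shared randomness $r_3$ across all active slots of a token (the efficiency trick of \cite{LeeL11}) rather than per-slot randomness as in the BW or SW schemes. Concretely, one must check that the composite-order reduction from C3DH to the simpler C2DH used in \cite{LeeL11} can be simulated in prime-order groups using only our Assumption \ref*{sec:ll-hve}-4, and that $\vec{B}_2$-blinding on ciphertexts never interacts with $\vec{B}_3$-blinding on tokens except through the vanishing pairing $e(g^{\vec{b}_2}, g^{\vec{b}_3}) = 1$. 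Once this is verified, the reduction is mechanical: the prime-order simulator is obtained by taking the composite-order simulator of \cite{LeeL11} and applying the vector operations of Definition \ref{def:vector-ops} componentwise, yielding a distinguishing advantage at least that of the composite-order simulator.
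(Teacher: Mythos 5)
Your proposal follows essentially the same route as the paper: the paper's proof is a one-line appeal to the fact that Assumptions \ref*{sec:ll-hve}-1 through \ref*{sec:ll-hve}-4 correspond to the DBDH, BSD, C3DH, and C2DH assumptions of \cite{LeeL11}, so the composite-order security proof carries over verbatim via the vector operations of Definition \ref{def:vector-ops}. Your write-up simply makes explicit the hybrid-by-hybrid correspondence and the orthogonality relations that the paper leaves implicit.
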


\begin{proof}
The proof of this lemma is directly obtained from \cite{LeeL11} since the
Assumptions \ref*{sec:ll-hve}-1, \ref*{sec:ll-hve}-2, \ref*{sec:ll-hve}-3,
and \ref*{sec:ll-hve}-4 in prime order bilinear groups are corresponds to the
DBDH, BSD, C3DH, and C2DH assumptions in composite order bilinear groups.
\end{proof}

\begin{lemma} \label{lem:ll-hve-a1}
If the DBDH assumption holds, then the Assumption \ref*{sec:ll-hve}-1 also
holds.
\end{lemma}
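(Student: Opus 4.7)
The plan is to give a direct reduction from DBDH to Assumption \ref*{sec:ll-hve}-1, following the template already used in Lemma \ref{lem:sw-dhve-a1}. Note that the basis vectors $\vec{b}_{1,1}, \vec{b}_{1,2}, \vec{b}_2, \vec{b}_3$ here are identical to those in Section \ref{sec:sw-dhve}, and the challenge tuple of Assumption \ref*{sec:ll-hve}-1 is literally the same shape as Assumption \ref*{sec:sw-dhve}-1, with the same target $T = e(g,g)^{c_1 c_2 c_3}$ versus $e(g,g)^d$. So essentially the same simulator works verbatim.

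Concretely, given a DBDH instance $((p,\G,\G_T,e),\, g,\, g^{c_1},\, g^{c_2},\, g^{c_3})$ and $T$, the simulator $\mc{B}$ would first pick $a_1, a_2, a_3 \in \Z_p$ uniformly on its own; since it knows these exponents in the clear, it can build each basis-vector element coordinatewise from $g$. In particular it sets $g^{\vec{b}_{1,1}} = (g,1,g^{a_1})$, $g^{\vec{b}_{1,2}} = (g, g^{a_2}, 1)$, $g^{\vec{b}_2} = (g^{a_2}, g^{-1}, g^{a_1 a_2 - a_3})$, and $g^{\vec{b}_3} = (g^{a_1}, g^{a_3}, g^{-1})$. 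Then it embeds the DBDH exponents by computing $(g^{\vec{b}_{1,1}})^{c_j} = (g^{c_j}, 1, (g^{c_j})^{a_1})$ and $(g^{\vec{b}_{1,2}})^{c_j} = (g^{c_j}, (g^{c_j})^{a_2}, 1)$ for $j = 1, 2$, together with $(g^{\vec{b}_{1,1}})^{c_3} = (g^{c_3}, 1, (g^{c_3})^{a_1})$. All of these are computable from only $g, g^{c_1}, g^{c_2}, g^{c_3}$ and the known $a_i$'s.

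Next $\mc{B}$ forwards the simulated tuple $D'$ together with the unchanged $T$ to $\mc{A}$ and outputs whatever $\mc{A}$ outputs. Because $a_1, a_2, a_3$ and $c_1, c_2, c_3$ are all uniform, the joint distribution of $D'$ is identical to a genuine Assumption \ref*{sec:ll-hve}-1 instance. Moreover the two cases align without re-randomizing $T$: the DBDH real case $T = e(g,g)^{c_1 c_2 c_3}$ is exactly the $T_0$ of Assumption \ref*{sec:ll-hve}-1, and likewise for the random case. Consequently $\mc{B}$ inherits $\mc{A}$'s advantage.

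There is no real obstacle here; the only point worth double-checking is that every element the simulator must hand to $\mc{A}$ is expressible using only $g$, the three DBDH exponentiations of $g$, and the simulator-chosen $a_i$'s, and in particular that no coordinate of any published vector requires $g^{c_1 c_2}$ or $g^{c_1 c_2 c_3}$ (it does not, since $c_3$ only appears in $(g^{\vec{b}_{1,1}})^{c_3}$ whose third coordinate is $(g^{c_3})^{a_1}$). Since this is satisfied, the reduction is tight and the lemma follows.
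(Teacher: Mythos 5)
Your proof is correct and matches the paper's intent: the paper itself omits this proof, observing that Assumption \ref*{sec:ll-hve}-1 is identical to Assumption \ref*{sec:sw-dhve}-1 and deferring to Lemma \ref{lem:sw-dhve-a1}, which is exactly the reduction you write out (you even silently fix the paper's typo where the third coordinate of $(g^{\vec{b}_{1,1}})^{c_3}$ is dropped).
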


\begin{lemma} \label{lem:ll-hve-a2}
The Assumption \ref*{sec:ll-hve}-2 holds for all adversaries.
\end{lemma}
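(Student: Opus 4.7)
The plan is to show that the two possible challenge targets $T_0$ and $T_1$ are in fact \emph{identically equal} as elements of $\G_T$, so that the assumption holds information-theoretically against every adversary, not merely every PPT adversary. This is the same argument used in Lemma \ref{lem:sw-dhve-a2}, applied to the same basis vectors $\vec{b}_{1,1} = (1,0,a_1),~ \vec{b}_{1,2} = (1,a_2,0),~ \vec{b}_2 = (a_2,-1,a_1a_2-a_3),~ \vec{b}_3 = (a_1,a_3,-1)$.

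First I would verify the three required orthogonality relations by computing the inner products directly: $\vec{b}_{1,1} \cdot \vec{b}_3 = a_1 - a_1 = 0$, $\vec{b}_2 \cdot \vec{b}_{1,2} = a_2 - a_2 = 0$, and $\vec{b}_2 \cdot \vec{b}_3 = a_1 a_2 - a_3 - (a_1 a_2 - a_3) = 0$. By the bilinear-product definition $e(g^{\vec{a}}, g^{\vec{b}}) = e(g,g)^{\vec{a}\cdot \vec{b}}$ from Definition \ref{def:bilinear-prod}, these three inner-product vanishings give $e(g^{\vec{b}_{1,1}}, g^{\vec{b}_3}) = e(g^{\vec{b}_2}, g^{\vec{b}_{1,2}}) = e(g^{\vec{b}_2}, g^{\vec{b}_3}) = 1$.

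Next I would expand $T_0$ using bilinearity:
\begin{align*}
T_0 &= e\big((g^{\vec{b}_{1,1}})^{c_1} (g^{\vec{b}_2})^{c_3},~ (g^{\vec{b}_{1,2}})^{c_2} (g^{\vec{b}_3})^{c_4}\big) \\
    &= e(g^{\vec{b}_{1,1}}, g^{\vec{b}_{1,2}})^{c_1 c_2} \cdot e(g^{\vec{b}_{1,1}}, g^{\vec{b}_3})^{c_1 c_4} \cdot e(g^{\vec{b}_2}, g^{\vec{b}_{1,2}})^{c_3 c_2} \cdot e(g^{\vec{b}_2}, g^{\vec{b}_3})^{c_3 c_4}.
\end{align*}
The last three factors are all $1$ by the orthogonality relations established above, so $T_0 = e(g^{\vec{b}_{1,1}}, g^{\vec{b}_{1,2}})^{c_1 c_2} = e((g^{\vec{b}_{1,1}})^{c_1}, (g^{\vec{b}_{1,2}})^{c_2}) = T_1$.

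Since $T_0$ and $T_1$ are equal for every choice of $c_1, c_2, c_3, c_4 \in \Z_p$, the two distributions that the challenger presents are identical, and no adversary (computationally bounded or not) can achieve advantage greater than $0$ in distinguishing them. There is no real obstacle here; the only point to double-check is that the extra basis vector $\vec{b}_3$ does not spoil any of the orthogonality relations used in the $N = p_1 p_2 p_3$ setting, which the three inner-product calculations above confirm.
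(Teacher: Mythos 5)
Your proof is correct and follows essentially the same route as the paper: the paper handles this lemma by noting that Assumption \ref*{sec:ll-hve}-2 is identical to Assumption \ref*{sec:sw-dhve}-2 and reusing the proof of Lemma \ref{lem:sw-dhve-a2}, which is exactly your orthogonality argument showing $T_0 = T_1$ identically. You merely spell out the inner-product computations and the bilinear expansion that the paper leaves implicit.
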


\begin{lemma} \label{lem:ll-hve-a3}
If the P3DH assumption holds, then the Assumption \ref*{sec:ll-hve}-3 also
holds.
\end{lemma}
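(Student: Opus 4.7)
The plan is to reuse, almost verbatim, the reduction from Lemma \ref{lem:sw-dhve-a3}, since Assumption \ref*{sec:ll-hve}-3 has exactly the same form as Assumption \ref*{sec:sw-dhve}-3 (same basis vectors $\vec{b}_{1,1},\vec{b}_{1,2},\vec{b}_2,\vec{b}_3$, same shape of given elements, and same $T_0/T_1$). So the goal is just to convert a P3DH challenge over the single group $\G$ into a challenge for Assumption \ref*{sec:ll-hve}-3 in the bilinear product group $\G^3$.

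The approach is as follows. Suppose $\mc{A}$ breaks Assumption \ref*{sec:ll-hve}-3 with non-negligible advantage. Build $\mc{B}$ that receives a P3DH instance
$D = ((p,\G,\G_T,e),(g,f),(g^{c_1},f^{c_1}),(g^{c_2},f^{c_2}),(g^{c_1c_2}f^{z_1},g^{z_1}),(g^{c_1c_2c_3}f^{z_2},g^{z_2}))$
together with $T = T_\gamma = (T_{\gamma,1},T_{\gamma,2})$. The reduction implicitly sets $a_2 = \log_g f$, picks fresh random $a_1, a_3 \in \Z_p$, and then defines the three-coordinate vectors by pasting an $a_1$-shift in coordinate 1 and an $a_3$-shift in coordinate 2 on top of each two-coordinate P3DH element, exactly mirroring Lemma \ref{lem:sw-dhve-a3}: e.g.\ $g^{\vec{b}_{1,1}} = (g,1,g^{a_1})$, $g^{\vec{b}_{1,2}}=(g,f,1)$, $g^{\vec{b}_2}=(f,g^{-1},f^{a_1}g^{-a_3})$, $g^{\vec{b}_3}=(g^{a_1},g^{a_3},g^{-1})$, and similarly lift $(g^{c_j},f^{c_j})$ to $(g^{\vec{b}_{1,2}})^{c_j}$, lift $(g^{c_1c_2}f^{z_1},g^{z_1})$ to $(g^{\vec{b}_{1,1}})^{c_1c_2}(g^{\vec{b}_2})^{z_1}$ by appending the third coordinate $(g^{c_1c_2}f^{z_1})^{a_1}(g^{z_1})^{-a_3}$, and analogously for the $c_1c_2c_3$ term and for $T$ itself, yielding $T' = (T_{\gamma,1},T_{\gamma,2},(T_{\gamma,1})^{a_1}(T_{\gamma,2})^{-a_3})$. $\mc{B}$ hands the resulting tuple to $\mc{A}$ and forwards its output bit.

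The main things to check are that the produced distribution is identical to that of Assumption \ref*{sec:ll-hve}-3 and that $T' = T'_0$ iff $T = T_0$. Both are straightforward: with $a_2$ set implicitly to $\log_g f$, a direct coordinate-by-coordinate inspection shows each vector equals $(g^{\vec{b}_{1,1}})^{c_1c_2c_3}(g^{\vec{b}_2})^{z}$ etc.\ for the correct exponents $c_1,c_2,c_3,z_1,z_2$, while $a_1, a_3$ are uniform and independent (matching the distribution of the basis vectors), and $z_1,z_2,z_3$ in the P3DH instance are uniform in $\Z_p$, matching the $\Z_p$-distributions required in Assumption \ref*{sec:ll-hve}-3. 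In the target, $(T_{\gamma,1})^{a_1}(T_{\gamma,2})^{-a_3}$ equals $g^{c_3 a_1}f^{z_3 a_1}g^{-z_3 a_3} = g^{c_3 a_1}(f^{a_1}g^{-a_3})^{z_3}$ when $\gamma=0$, which is exactly the third coordinate of $(g^{\vec{b}_{1,1}})^{c_3}(g^{\vec{b}_2})^{z_3}$, and analogously with $c_3$ replaced by $d$ when $\gamma=1$. Hence $\mc{B}$ inherits $\mc{A}$'s advantage.

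The only potential obstacle is the bookkeeping: making sure that the third coordinates of every lifted element use the \emph{same} implicit $a_2 = \log_g f$ and the \emph{same} sampled $a_1,a_3$, so that the simulated $(g^{\vec{b}_{1,1}},g^{\vec{b}_{1,2}},g^{\vec{b}_2},g^{\vec{b}_3})$ are globally consistent. Since this is the identical embedding already verified in Lemma \ref{lem:sw-dhve-a3}, no new calculation is needed; the proof is essentially a pointer to that lemma.
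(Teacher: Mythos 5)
Your proposal is correct and matches the paper exactly: the paper itself observes that Assumption \ref*{sec:ll-hve}-3 is identical to Assumption \ref*{sec:sw-dhve}-3 and therefore omits the proof, deferring to Lemma \ref{lem:sw-dhve-a3}. Your spelled-out reduction (implicitly setting $a_2=\log_g f$, sampling fresh $a_1,a_3$, and lifting each two-coordinate P3DH element to three coordinates) is precisely the embedding used there, so no new argument is needed.
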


\noindent The Assumptions \ref*{sec:ll-hve}-1, \ref*{sec:ll-hve}-2, and
\ref*{sec:ll-hve}-3 are the same as the Assumptions \ref*{sec:sw-dhve}-1,
\ref*{sec:sw-dhve}-2, and \ref*{sec:sw-dhve}-3. Thus we omits the proofs of
Lemmas \ref*{lem:ll-hve-a1}, \ref*{lem:ll-hve-a2}, \ref*{lem:ll-hve-a3}.

\begin{lemma} \label{lem:ll-hve-a4}
If the P3DH assumption holds, then the Assumption \ref*{sec:ll-hve}-4 also
holds.
\end{lemma}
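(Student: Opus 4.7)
The plan is to reduce from the P3DH assumption in its modified form. The crucial creative step is the mapping: rather than the naive $c_1 \mapsto a$, $c_2 \mapsto b$, which would leave the LL-4 challenge $g^{c_1 c_2}$ unrelated to the P3DH challenge, I would identify $c_1 = ab$ and $c_2 = c$. Under this mapping, the two non-challenge P3DH tuples $(g^{ab} f^{z_1}, g^{z_1})$ and $(g^{c} f^{z_2}, g^{z_2})$ serve as the blinded $c_1$- and $c_2$-elements needed in Assumption \ref*{sec:ll-hve}-4, while the P3DH challenge $T = (g^{abc} f^{z_3}, g^{z_3})$ versus $(g^{d} f^{z_3}, g^{z_3})$ is exactly the LL-4 challenge since $c_1 c_2 = abc$.

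For the basis vectors, I would have $\mc{B}$ sample $a_2, a_3' \in \Z_p$ uniformly and implicitly set $a_1 = \log f$ together with $a_3 = a_3' + a_1 a_2$. With these choices, the four published vectors
\[
g^{\vec{b}_{1,1}} = (g, 1, f),\quad
g^{\vec{b}_{1,2}} = (g, g^{a_2}, 1),\quad
g^{\vec{b}_{2}} = (g^{a_2}, g^{-1}, g^{-a_3'}),\quad
g^{\vec{b}_{3}} = (f, g^{a_3'} f^{a_2}, g^{-1})
\]
are all computable from $g, f, a_2, a_3'$ alone. The main computational step is a generic lifting rule in the spirit of Lemma \ref{lem:sw-dhve-a3}: any 2-dimensional tuple of the form $(X, Y) = (g^{x} f^{z}, g^{z})$ can be lifted to $(X,\, X^{a_2} Y^{a_3'},\, Y^{-1})$, and a short expansion (using $f = g^{a_1}$ and $a_1 a_2 + a_3' = a_3$) confirms that this three-component vector is exactly $(g^{\vec{b}_{1,2}})^{x} (g^{\vec{b}_{3}})^{z}$. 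Applying this rule separately to $(g^{ab}f^{z_1},g^{z_1})$, to $(g^{c}f^{z_2},g^{z_2})$, and to $T$ supplies the full LL-4 instance for $\mc{A}$.

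The main obstacle I expect is just a distribution check: the only place where a loss arises is the distribution of the implicit $c_1 = ab \bmod p$, which is uniform on $\Z_p$ conditioned on $a \neq 0$ and therefore within statistical distance $O(1/p)$ of the uniform distribution required by Assumption \ref*{sec:ll-hve}-4. All other implicit variables, namely $a_1 = \log f$, $a_2$, $a_3 = a_3' + a_1 a_2$, $c_2 = c$, $d$, and $z_1, z_2, z_3$, inherit uniform marginals either from the P3DH instance or from $\mc{B}$'s own samples, and the joint $(a_1, a_2, a_3)$ remains uniform over $\Z_p^3$ because $a_3'$ is independent of $a_1, a_2$. Once these routine checks are in place, $\mc{B}$ forwards $\mc{A}$'s bit $\gamma'$ as its own answer and inherits $\mc{A}$'s advantage up to the stated negligible loss, completing the reduction.
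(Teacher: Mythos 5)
Your reduction is correct and is essentially the paper's own proof: the paper likewise uses the modified P3DH form, implicitly sets $c'_1 = ab$, $c'_2 = c$ so that the P3DH target decides $c'_1c'_2$ versus random, embeds $\log f$ into the first coordinate of the basis (with the other basis randomness sampled, differing from yours only by a sign convention on the sampled $a_3$), and lifts each pair $(g^x f^z, g^z)$ to $(g^{\vec{b}_{1,2}})^{x}(g^{\vec{b}_{3}})^{z}$ by exactly your three-component rule. Your explicit $O(1/p)$ distribution check on $c_1 = ab$ is a minor refinement the paper glosses over.
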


\begin{proof}
Suppose there exists an adversary $\mc{A}$ that breaks the Assumption
\ref*{sec:ll-hve}-4 with a non-negligible advantage. An algorithm $\mc{B}$
that solves the P3DH assumption using $\mc{A}$ is given: a challenge tuple
    $D = ((p, \G, \G_T, e), (g,f),
    (g^{c_1}, f^{c_1}), (g^{c_2}, f^{c_2}),
    (g^{c_1 c_2} f^{z_1}, g^{z_1}), (g^{c_3} f^{z_2}, g^{z_2}))$
and $T = T_{\gamma} = (T_{\gamma,1}, T_{\gamma,2})$ where $T = T_0 = (g^{c_1
c_2 c_3} f^{z_3}, g^{z_3})$ or $T = T_1 = (g^d f^{z_3}, g^{z_3})$. $\mc{B}$
first chooses random values $a_2, a_3 \in \Z_p$ and sets
    \begin{align*}
    &   g^{\vec{b}_{1,1}} = (g, 1, f),~
        g^{\vec{b}_{1,2}} = (g, g^{a_2}, 1),~
        g^{\vec{b}_2} = (g^{a_2}, g^{-1}, g^{a_3}),~
        g^{\vec{b}_3} = (f, f^{a_2} g^{-a_3}, g^{-1}),~ \\
    &   (g^{\vec{b}_{1,2}})^{c'_1} (g^{\vec{b}_3})^{z_1}
        = (g^{c_1 c_2} f^{z_1}, (g^{c_1 c_2} f^{z_1})^{a_2} (g^{z_1})^{-a_3},
            (g^{z_1})^{-1}),~ \\
    &   (g^{\vec{b}_{1,2}})^{c'_2} (g^{\vec{b}_3})^{z_2}
        = (g^{c_3} f^{z_2}, (g^{c_3} f^{z_2})^{a_2} (g^{z_2})^{-a_3}, (g^{z_2})^{-1}),~ \\
    &   T' = (T_{\gamma,1}, (T_{\gamma,1})^{a_2} (T_{\gamma,2})^{-a_3},
        (T_{\gamma,2})^{-1}).
    \end{align*}
Intuitively, it sets $a'_1 = \log f, a'_2 = a_2, a'_3 = a_1 a_2 - a_3$ and
$c'_1 = c_1 c_2, c'_2 = c_3$ where $a'_1, a'_2, a'_3$ are elements of basis
vectors for the Assumption \ref*{sec:ll-hve}-4. Next, it gives the tuple
    $D' = ((p,\G, \G_T, e),
    g^{\vec{b}_{1,1}}, g^{\vec{b}_{1,2}}, g^{\vec{b}_2}, g^{\vec{b}_3}, \lb
    (g^{\vec{b}_{1,1}})^{c'_1} (g^{\vec{b}_2})^{z_1},
    (g^{\vec{b}_{1,1}})^{c'_2} (g^{\vec{b}_2})^{z_2} )$ and $T'$
to $\mc{A}$. Then $\mc{A}$ outputs a guess $\gamma'$. $\mc{B}$ also outputs
$\gamma'$. If the advantage of $\mc{A}$ is $\epsilon$, then the advantage of
$\mc{B}$ is greater than $\epsilon$ since the distribution of the challenge
tuple to $\mc{A}$ is equal to the Assumption \ref*{sec:ll-hve}-4.
\end{proof}

\section{Conclusion}

We converted the HVE scheme of Boneh and Waters, the delegatable HVE scheme of
Shi and Waters, and the efficient HVE scheme of Lee and Lee from composite
order bilinear groups to prime order bilinear groups. Though we used our
conversion method to HVE schemes that based on the decisional C3DH assumption,
it would be possible to use our method to other scheme in composite order
bilinear groups that based on the decisional C3DH assumption.

\bibliographystyle{plain}
\bibliography{convert-hve-prime}

\appendix

\section{Generic Group Model} \label{sec:gen-mod}

In this section, we show that the P3DH assumption holds in the generic group
model. The generic group model introduced by Shoup \cite{Shoup97} is a tool
for analyzing generic algorithms that work independently of the group
representation.

\subsection{Master Theorem}

We generalize the master theorem of Katz et al. \cite{KatzSW08} to use prime
order bilinear groups instead of composite order bilinear groups and to use
multiple groups elements in the target instead of just one element.

Let $\G, \G_T$ be cyclic bilinear groups of order $p$ where $p$ is a large
prime. The bilinear map is defined as $e:\G \times \G \rightarrow \G_T$. In
the generic group model, a random group element of $\G, \G_T$ is represented
as a random variable $P_i, R_i$ respectively where $P_i, R_i$ are chosen
uniformly in $\Z_p$. We say that a random variable has degree $t$ if the
maximum degree of any variable is $t$. Then we can naturally define the
dependence and independence of random variables as in Definition
\ref{def:dep-indep}.

\begin{definition} \label{def:dep-indep}
Let $P = \{P_1, \ldots, P_u\},~ T_0 = \{T_{0,1}, \ldots, T_{0,m}\},~ T_1 =
\{T_{1,1}, \ldots, T_{1,m}\}$ be random variables over $\G$ where $T_{0,i}
\neq T_{1,i}$ for all $1\leq i\leq m$, and let $R = \{R_1, \ldots, R_v\}$ be
random variables over $\G_T$. We say that $T_b$ is dependent on $A$ if there
exists constants $\{\alpha_i\}, \{\beta_i\}$ such that
    \begin{align*}
    \sum_i^m \alpha_i T_{b,i} = \sum_i^u \beta_i \cdot P_i
    \end{align*}
where $\alpha_i \neq 0$ for at least one $i$. We say that $T_b$ is independent
of $P$ if $T_b$ is not dependent on $P$.

Let $S_1 = \{ (i,j) ~|~ e(T_{0,i}, T_{0,j}) \neq e(T_{1,i}, T_{1,j}) \}$ and
$S_2 = \{ (i,j) ~|~ e(T_{0,i}, P_j) \neq e(T_{1,i}, P_j) \}$. We say that $\{
e(T_{b,i}, T_{b,j}) \}_{(i,j) \in S_1} \cup \{ e(T_{b,i},P_j) \}_{(i,j) \in
S_2}$ is dependent on $P \cup R \cup \{ e(T_{b,i}, T_{b,j}) \}_{(i,j) \notin
S_1} \cup \{ e(T_{b,i},P_j) \}_{(i,j) \notin S_2}$ if there exist constants
$\{\alpha_{i,j}\}, \{\alpha'_{i,j}\}, \{\beta_{i,j}\}, \{\beta'_{i,j}\},
\{\gamma_{i,j}\}, \{\delta_i\}$ such that
    \begin{align*}
    &   \sum_{(i,j) \in S_1} \alpha_{i,j} \cdot e(T_{b,i}, T_{b,j}) +
        \sum_{(i,j) \notin S_1} \alpha'_{i,j} \cdot e(T_{b,i}, T_{b,j}) +
        \sum_{(i,j) \in S_2} \beta_{i,j} \cdot e(T_{b,i}, P_j) +
        \sum_{(i,j) \notin S_2} \beta'_{i,j} \cdot e(T_{b,i}, P_j) \\
    &   = \sum_i^u \sum_j^u \gamma_{i,j} \cdot e(P_i, P_j) +
    \sum_i^v \delta_i \cdot R_i.
    \end{align*}
where $\alpha_{i,j} \neq 0$ for at least one $(i,j) \in S_1$ or $\beta_{i,j}
\neq 0$ for at least one $(i,j) \in S_2$. We say that $\{ e(T_{b,i}, T_{b,j})
\}_{(i,j) \in S_1} \cup \{ e(T_{b,i},P_j) \}_{(i,j) \in S_2}$ is independent
of $P \cup R \cup \{ e(T_{b,i}, T_{b,j}) \}_{(i,j) \notin S_1} \cup \{
e(T_{b,i},P_j) \}_{(i,j) \notin S_2}$ if $\{ e(T_{b,i}, T_{b,j}) \}_{(i,j)
\in S_1} \lb \cup \{ e(T_{b,i},P_j) \}_{(i,j) \in S_2}$ is not dependent on
$P \cup R \cup \{ e(T_{b,i}, T_{b,j}) \}_{(i,j) \notin S_1} \cup \{
e(T_{b,i},P_j) \}_{(i,j) \notin S_2}$.
\end{definition}

Using the above dependence and independence of random variables, we can
obtain the following theorem from the master theorem of Katz et al.
\cite{KatzSW08}.

\begin{theorem} \label{thm:master}
Let $P = \{P_1, \ldots, P_u\},~ T_0 = \{T_{0,1}, \ldots, T_{0,m}\},~ T_1 =
\{T_{1,1}, \ldots, T_{1,m}\}$ be random variables over $\G$ where $T_{0,i}
\neq T_{1,i}$ for all $1\leq i\leq m$, and let $R = \{R_1, \ldots, R_v\}$ be
random variables over $\G_T$. Consider the following experiment in the
generic group model:
    \begin{quote}
    An algorithm is given $P = \{P_1, \ldots, P_u\}$ and $R = \{R_1,
    \ldots, R_v\}$. A random bit $b$ is chosen, and the adversary is given
    $T_b = \{T_{b,1}, \ldots, T_{b,m}\}$. The algorithm outputs a bit $b'$,
    and succeeds if $b'=b$. The algorithm's advantage is the absolute value
    of the difference between its success probability and $1/2$.
    \end{quote}
Let $S_1 = \{(i,j) ~|~ e(T_{0,i}, T_{0,j}) \neq e(T_{1,i}, T_{1,j})\}$ and
$S_2 = \{ (i,j) ~|~ e(T_{0,i}, P_j) \neq e(T_{1,i}, P_j) \}$. If $T_b$ is
independent of $P$ for all $b \in \{0,1\}$, and $\{ e(T_{b,i}, T_{b,j})
\}_{(i,j) \in S_1} \cup \{ e(T_{b,i},P_j) \}_{(i,j) \in S_2}$ is independent
of $P \cup R \cup \{ e(T_{b,i}, T_{b,j}) \}_{(i,j) \notin S_1} \cup \{
e(T_{b,i},P_j) \}_{(i,j) \notin S_2}$ for all $b \in \{0,1\}$, then any
algorithm $\mc{A}$ issuing at most $q$ instructions has an advantage at most
$O(q^2t/p)$.
\end{theorem}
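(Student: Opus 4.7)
The plan is to adapt the master-theorem argument of Katz, Sahai, and Waters (\cite{KatzSW08}) in the standard way, replacing the composite-order setting by prime-order bilinear groups and lifting the notion of a scalar target to a tuple target $T_b = (T_{b,1},\ldots,T_{b,m})$. The overall skeleton is a symbolic simulation argument followed by a Schwartz--Zippel union bound. First, I would model the generic group oracles so that every element of $\G$ or $\G_T$ ever handled by $\mc{A}$ is tagged with an integer handle, and the challenger keeps lists $L_\G$ and $L_{\G_T}$ whose entries are formal polynomials in the indeterminates underlying $P$, $R$, and $T_b$ (the bit $b$ is fixed but the actual assignments to indeterminates will be made at the end). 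Whenever $\mc{A}$ asks for a group operation or a pairing, the challenger appends the corresponding formal polynomial to $L_\G$ or $L_{\G_T}$; whenever $\mc{A}$ asks whether two handles are equal, the challenger answers ``yes'' iff the two formal polynomials are identically equal.

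Next, I would compare this symbolic simulation to the real experiment. The only way $\mc{A}$ can tell them apart is if two distinct formal polynomials $f \neq f'$ appearing in $L_\G$ or $L_{\G_T}$ evaluate to the same value under a uniformly random assignment of the underlying $\Z_p$-variables; since every polynomial in these lists has degree at most $t$ (here $t$ bounds the degree of any of the random variables involved), Schwartz--Zippel gives collision probability at most $2t/p$ per pair, and with at most $q$ queries there are at most $O(q^2)$ such pairs, giving an overall statistical gap of $O(q^2 t / p)$ between the real and symbolic games.

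It therefore remains to show that in the purely symbolic game $\mc{A}$ has advantage $0$, i.e.\ that no formal polynomial $\mc{A}$ can build is affected by the bit $b$. Every element $\mc{A}$ can produce in $\G$ is an $\Z_p$-linear combination of $P \cup T_b$, and every element it can produce in $\G_T$ is an $\Z_p$-linear combination of $R$ together with all pairings $e(\cdot,\cdot)$ of elements already in $\G$. The first independence hypothesis (that $T_b$ is independent of $P$ for both $b$) says that the $\G$-lists $L_\G$ under $b=0$ and $b=1$ are indistinguishable as abstract linear combinations, since any equality between two such combinations that depends on $T_b$ would exhibit $T_b$ as a combination of $P$. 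The second independence hypothesis handles $L_{\G_T}$ in exactly the same way: the pairings split into those in $S_1 \cup S_2$ (which are ``sensitive'' to $b$) and those outside ($\mathrm{not}$-$S_1$ and $\mathrm{not}$-$S_2$, which are identically distributed in the two games), and the hypothesis forbids any nontrivial $\Z_p$-linear dependence among the sensitive pairings over the non-sensitive ones together with $P$ and $R$. Hence every equality query in the symbolic game returns the same answer regardless of $b$, so $\mc{A}$'s symbolic advantage is exactly $0$.

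The main obstacle I expect is purely bookkeeping: because $T_b$ is now a tuple and pairings mix $T_b$-components with $P$-components as well as with each other, the set of ``sensitive'' pairings is exactly $\{e(T_{b,i},T_{b,j})\}_{(i,j)\in S_1} \cup \{e(T_{b,i},P_j)\}_{(i,j)\in S_2}$, and one has to be careful that every possible symbolic distinguisher an adversary might construct is captured by a dependence of this precise form on $P \cup R$ together with the non-sensitive pairings. Once this bookkeeping is set up correctly, the theorem reduces verbatim to the Katz--Sahai--Waters bound, and the factor $t$ in $O(q^2 t/p)$ comes directly from the Schwartz--Zippel step applied to polynomials of degree at most $t$ over $\Z_p$.
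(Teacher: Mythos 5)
Your proposal is correct and follows essentially the same route as the paper, which simply invokes the master-theorem argument of Katz, Sahai, and Waters and notes that the symbolic-simulation plus Schwartz--Zippel analysis carries over to prime order and to tuple targets; your reconstruction of the symbolic game, the $O(q^2t/p)$ collision bound, and the reduction of any $b$-dependent equality to a forbidden dependence of the form in Definition \ref{def:dep-indep} is exactly the content the paper is relying on. In fact you supply more detail than the paper does, which confines itself to the two-sentence remark following the theorem statement.
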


Note that this theorem that is a slight modification of that of Katz et al.
\cite{KatzSW08} still holds in prime order bilinear groups since the
dependent equation of an adversary can be used to distinguish the target
$T_b$ of the assumption. Additionally, it still holds when the target
consists of multiple group elements since the adversary can only make a
dependent equation in Definition \ref{def:dep-indep}.

\subsection{Analysis of P3DH Assumption}

To analyze the P3DH assumption in the generic group model, we only need to
show the independence of $T_0, T_1$ random variables. Using the notation of
previous section, the P3DH assumption can be written as follows
    \begin{align*}
    &   P = \{ 1, X, A, XA, B, XB, AB + XZ_1, Z_1, C + XZ_2, Z_2 \},~
        R = \{ 1 \} \\
    &   T_0 = \{ ABC + XZ_3, Z_3 \},~ T_1 = \{ D + XZ_3, Z_3 \}.
    \end{align*}

The $T_1$ has a random variable $D$ that does not exist in $P$. Thus the
independence of $T_1$ is easily obtained. Therefore, we only need to consider
the independence of $T_0$. First, $T_0$ is independent of $P$ since $T_0$
contains $Z_3$ that does not exist in $P$. For the independence of $\{
e(T_{0,i},T_{0,j}) \}_{(i,j) \in S_1} \cup \{ e(T_{0,i},P_j) \}_{(i,j) \in
S_2}$, we should define two sets $S_1, S_2$. We obtain that $S_1 = \{(1,1),
(1,2), (2,1), (2,2)\}$. However, $e(T_{0,i},T_{0,j})$ contains $Z_3^2$
because of $Z_3$ in $T_0$, and $Z_3^2$ can not be obtained from the right
part of the equation in Definition \ref{def:dep-indep}. Thus, the constants
$\alpha_{i,j}$ should be zero for all $(i,j)$. From this, we obtain the
simple equations as follows
    \begin{align*}
    \sum_{(i,j) \in S_2} \beta_{i,j} \cdot e(T_{b,i}, P_j) +
    \sum_{(i,j) \notin S_2} \beta'_{i,j} \cdot e(T_{b,i}, P_j)
    = \sum_i^u \sum_j^u \gamma_{i,j} \cdot e(P_i, P_j) +
    \sum_i^v \delta_i \cdot R_i.
    \end{align*}

The set $S_2$ is defined as $\{(i,j) ~|~ \forall i,j\}$ because of $D$ in
$T_1$. However, $Z_3$ in $T_0$ should be removed to construct a dependent
equation since $Z_3$ does not exists in $P,R$. To remove $Z_3$ from the left
part of the above simple equation, two random variables $Y, XY$ should be
paired with $T_{0,i}$ for some $Y \in P$. If $Z_3$ is remove in the left part
of the above simple equation, then the left part has at least a degree $3$ and
it contains $ABC$. To have a degree $3$ in the right part of the above simple
equation, $AB+XZ_1, Z_1$ should be used. However, the right part of the above
equation can not contain $ABC$ since $C, XC$ do not exist in $P$. Therefore,
the independence of $T_0$ is obtained.

\end{document}